\newtheorem{thm}{Theorem}[section]
\newtheorem{lma}[thm]{Lemma}
\newtheorem{cor}[thm]{Corollary}
\newtheorem{prob}{Problem}
\newtheorem{prop}[thm]{Proposition}
\newtheorem{case}{Case}
\newtheorem{subcase}{Case}[case]
\DeclareMathOperator{\col}{col}
\DeclareMathOperator{\bare}{trunk}
\begin{document}

\title{The complexity of FREE FLOOD IT on $2 \times n$ boards}
\date{\today}
\author{Kitty Meeks and Alexander Scott\\
\small{Mathematical Institute, University of Oxford, 24-29 St Giles, Oxford OX1 3LB, UK} \\
\texttt{\small{\{meeks,scott\}@maths.ox.ac.uk}}}
\maketitle

\begin{abstract}
We consider the complexity of problems related to the combinatorial game Free-Flood-It, in which players aim to make a coloured graph monochromatic with the minimum possible number of flooding operations.  Our main result is that computing the length of an optimal sequence is fixed parameter tractable (with the number of colours as a parameter) when restricted to rectangular $2 \times n$ boards.  We also show that, when the number of colours is unbounded, the problem remains NP-hard on such boards.  These results resolve a question of Clifford, Jalsenius, Montanaro and Sach.
\end{abstract}

\section{Introduction}
In this paper we consider the complexity of problems related to the one-player combinatorial game Flood-It, introduced by Arthur, Clifford, Jalsenius, Montanaro and Sach in \cite{arthurFUN}.  The original game is played on a board consisting of an $n \times n$ grid of coloured squares, each square given a colour from some fixed colour-set, but we can more generally regard the game as being played on a vertex-coloured graph.  A move then consists of picking a vertex $v$ and a colour $d$, and giving all vertices in the same monochromatic component as $v$ colour $d$.  The goal is to make the entire graph monochromatic with as few such moves as possible.

When the game is played on a planar graph, it can be regarded as modelling repeated use of the flood-fill tool in Microsoft Paint.  Implementations of the game, played on a square grid, are widely available online, and include a flash game \cite{flash} as well as popular smartphone apps \cite{iphoneapp,androidapp}.  There also exist implementations using a hexagonal grid: Mad Virus \cite{madvirus} is the same one-player game described above, while the Honey Bee Game \cite{honeybee} is a two player variant, and has been studied by Fleischer and Woeginger \cite{fleischer10}.  All these implementations are based on the ``fixed'' version of the game, where all moves must be played at the same fixed vertex (usually the vertex corresponding to the top left square when the board is an $n \times n$ grid).

For any coloured graph, we define the following problems.
\begin{itemize}
\item \textsc{Free-Flood-It} is the problem of determining the minimum number of moves required to flood the graph, if we are allowed to make moves anywhere in the graph.
\item \textsc{Fixed-Flood-It} is the same problem when all moves must be played at a single specified vertex.\footnote{\textsc{Fixed Flood It} is often referred to as simply \textsc{Flood-It}, but we use the longer name to avoid confusion with the free version.}
\item $c$-\textsc{Free-Flood-It} and $c$-\textsc{Fixed-Flood-It} respectively are the variants of \textsc{Free-Flood-It} and \textsc{Fixed-Flood-It} in which only colours from some fixed set of size $c$ are used.
\end{itemize}
Note that we can trivially flood an $n$-vertex graph with $n-1$ moves, and that if $c$ colours are present in the initial colouring we require at least $c-1$ moves.

These problems are known to be computationally difficult in many situations.  In \cite{arthurFUN}, Arthur, Clifford, Jalsenius, Montanaro and Sach proved that $c$-\textsc{Free-Flood-It} is NP-hard in the case of an $n \times n$ grid, for every $c \geq 3$, and that this result also holds for the fixed variant.  Lagoutte, Noual and Thierry \cite{lagoutte,lagoutte11} showed that the same result holds when the game is played instead on a hexagonal grid, as in Mad Virus or a one-player version of the Honey Bee Game.  Fleischer and Woeginger \cite{fleischer10} proved that $c$-\textsc{Fixed Flood It} remains NP-hard when restricted to trees, for every $c \geq 4$,\footnote{Note that this proof does in fact require four colours, not three as stated in a previous version of \cite{fleischer10}.} and Fukui, Nakanishi, Uehara, Uno and Uno \cite{fukui} demonstrated that this result can be extended to show the hardness $c$-\textsc{Free Flood It} under the same conditions.

A few positive results are known, however.  2-\textsc{Free-Flood-It} is solvable in polynomial time on arbitrary graphs, a result shown independently by Clifford et.~al.~\cite{clifford}, Lagoutte \cite{lagoutte} and Meeks and Scott \cite{general}.  It is also known that \textsc{Fixed-Flood-It} and \textsc{Free-Flood-It} are solvable in polynomial time on paths \cite{clifford,general,fukui} and cycles \cite{fukui}, and more generally on any graph with only a polynomial number of connected subgraphs \cite{spanningFUN,spanning}.  Meeks and Scott also show that the number of moves required to create a monochromatic component containing an arbitrary, bounded-size subset of the vertices can be computed in polynomial time, even when the number of colours is unbounded \cite{spanning,spanningFUN}.

A major focus of previous research has been the restriction of the game to rectangular boards of fixed height.  Although an additive approximation for $c$-\textsc{Free-Flood-It} can be computed in polynomial time \cite{general}, solving either $c$-\textsc{Free-Flood-It} or $c$-\textsc{Fixed-Flood-It} exactly remains NP-hard on $3 \times n$ boards, whenever $c \geq 4$ \cite{general}.  However, Clifford et.~al.~\cite{clifford} give a linear time algorithm for \textsc{Fixed-Flood-It} on $2 \times n$ boards.   They also raise the question of the complexity of the free variant in this setting.

Here we address this remaining case of ($c$-)\textsc{Free-Flood-It} restricted to $2 \times n$ boards, which turn out to be a particularly interesting class of graphs on which to analyse the game.  The majority of the paper describes an algorithm to demonstrate that $c$-\textsc{Free-Flood-It}, restricted to $2 \times n$ boards, is fixed parameter tractable with parameter $c$.  To do this we exploit some general results from \cite{spanning} about the relationship between the number of moves required to flood a graph and its spanning trees.  

On the other hand, we also show that \textsc{Free-Flood-It} remains NP-hard in this setting.  This is a somewhat surprising result, as it gives the first example of a class of graphs on which the complexity of \textsc{Fixed-Flood-It} and \textsc{Free-Flood-It} has been shown to be different.

The rest of the paper is organised as follows.  We begin with notation and definitions in Section \ref{notation}, before giving our algorithm for $c$-\textsc{Free-Flood-It} in Section \ref{fpt}.  Finally, in Section \ref{NPhard}, we show that the problem remains NP-hard when the number of colours used is unbounded.

\section{Notation and definitions}
\label{notation}

Although the original Flood-It game is played on a square grid, and our main results here concern the game restricted to a rectangular grid, it is convenient to consider the generalisation of the game to an arbitrary graph $G=(V,E)$, equipped with an initial colouring $\omega$ using colours from the \emph{colour-set} $C$.  Then each move $m=(v,d)$ consists of choosing some vertex $v \in V$ and a colour $d \in C$, and assigning colour $d$ to all vertices in the same monochromatic component as $v$.  The goal is to give every vertex in $G$ the same colour, using as few moves as possible. 

Given any connected graph $G$, equipped with a colouring $\omega$ (not necessarily proper), we define $m(G,\omega,d)$ to be the minimum number of moves required in the free variant to give all its vertices colour $d$, and $m(G,\omega)$ to be $\min_{d \in C}m(G,\omega,d)$.  If $S$ is a sequence of moves played on a graph $G$ with initial colouring $\omega$, we denote by $S(\omega,G)$ the new colouring obtained by playing $S$ in $G$.  Note that, if the initial colouring $\omega$ of $G$ is not proper, we may obtain an equivalent coloured graph $G'$ (with colouring $\omega'$) by contracting monochromatic components of $G$ with respect to $\omega$.

Let $A$ be any subset of $V$.  We denote by $\col(A,\omega)$ the set of colours assigned to vertices of $A$ by $\omega$. We say a move $m = (v,d)$ is \emph{played in} $A$ if $v \in A$, and that $A$ is \emph{linked} if it is contained in a single monochromatic component.  Subsets $A,B \subseteq V$ are \emph{adjacent} if there exists $ab \in E$ with $a \in A$ and $b \in B$.    

When we consider the game played on a rectangular board $B$, we are effectively playing the game in a corresponding coloured graph $G$, obtained from the planar dual of $B$ (in which there is one vertex corresponding to each square of $B$, and vertices are adjacent if they correspond to squares which are either horizontally or vertically adjacent in $B$) by giving each vertex the colour of the corresponding square in $B$.  We identify areas of $B$ with the corresponding subgraphs of $G$, and may refer to them interchangeably.

We define a \emph{border} of $B$ to be a union of edges of squares on the original board $B$ that forms a path from the top edge of the board to the bottom (but not including any edges that form the top or bottom edge of the board).  Thus, a border in $B$ corresponds to an edge-cut in the corresponding graph.  Observe that a border is uniquely defined by the points at which it meets the top and bottom of the board, so there are $(n+1)^2$ borders in total.  We denote by $b_L$ and $b_R$ the borders corresponding to the left-hand and right-hand edges of the board respectively.  Given two borders $b_1$ and $b_2$, we write $b_1 \leq b_2$ if and only if $b_1$ meets both the top and bottom of the board to the left of (or at the same point as) $b_2$, and write $b_1 < b_2$ if $b_1 \leq b_2$ and $b_1 \neq b_2$.  Note that if $b_1 \leq b_2$ then $b_1$ lies entirely to the left of $b_2$ (the two borders may meet but never cross); this is a special property of $2 \times n$ boards and does not hold for edge-cuts in graphs corresponding to $k \times n$ boards for $k \geq 3$.  

If $G$ is the graph corresponding to the $2 \times n$ board $B$, we say that a vertex (or subgraph) is \emph{incident} with a border $b$ if the vertex (or some vertex in the subgraph) corresponds to a square on $B$ whose edge forms part of $b$.  If $b_1 < b_2$ are borders, we denote the subgraph induced by vertices lying between $b_1$ and $b_2$ by $B[b_1,b_2]$, and we say $B[b_1,b_2]$ is a \emph{section} if it is connected.  

Finally, given any tree $T$, we denote by $\bare(T)$ the subtree obtained by deleting all leaves of $T$, and given any $x,y \in V(T)$ we set $P(T,x,y)$ to be the unique path from $x$ to $y$ in $T$.

\section{$c$-\textsc{Free-Flood-It} on $2 \times n$ boards}
\label{fpt}

In this section, we give an algorithm to solve $c$-\textsc{Free-Flood-It} on $2 \times n$ boards.  More specifically, we prove the following result, which shows that $c$-\textsc{Free-Flood-It}, restricted to $2 \times n$ boards, is fixed parameter tractable, parameterised by $c$.  This answers an open question of Clifford, Jalsenius, Montanaro and Sach \cite{clifford}.

\begin{thm}
When restricted to $2 \times n$ boards, $c$-\textsc{Free-Flood-It} can be solved in time $O(n^{10} \cdot 2^{c})$.
\label{2xn-fpt}
\end{thm}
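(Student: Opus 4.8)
The plan is to solve $c$-\textsc{Free-Flood-It} on $2 \times n$ boards by dynamic programming over the natural left-to-right structure of the board, exploiting the special property noted in the excerpt that borders never cross (so they are linearly ordered by $\leq$). The key structural fact I would aim to establish first is that an optimal flooding sequence can be decomposed along borders: intuitively, the moves can be reorganised so that the board is flooded section by section, where each section $B[b_1,b_2]$ is eventually absorbed into a growing monochromatic region. The paper flags that it will \emph{exploit some general results from \cite{spanning} about the relationship between the number of moves required to flood a graph and its spanning trees}, so I would look for a characterisation of $m(G,\omega)$ in terms of optimal spanning-tree structures, and then argue that on a $2 \times n$ board these trees interact cleanly with the border ordering.

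\textbf{Setting up the dynamic program.} First I would define a subproblem indexed by a pair of borders $b_1 < b_2$ together with a small amount of ``interface'' data describing the colour-state along the cut. Since the board has height $2$, any border separates at most two squares' worth of colour information on each side, so the boundary data is bounded. Crucially, because only $c$ colours are available, the set of possible colourings of a linked region is controlled by a subset of the colour-set, which is where the $2^c$ factor in the target bound $O(n^{11} \cdot 2^c)$ must enter: the DP state should record \emph{which} subset of colours is still ``active'' or present in the relevant component, giving $2^c$ possibilities. The polynomial factor $n^{11}$ should then arise from the number of choices of borders and interface positions (each a power of $n$, since there are $(n+1)^2$ borders) multiplied over the constantly many coordinates the state needs, together with the per-transition cost.

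\textbf{The transitions} would combine the solution for $B[b_L, b_2]$ from the solution for $B[b_L, b_1]$ by accounting for the moves needed to merge the new section $B[b_1, b_2]$ into the already-flooded left part. Here I would use the spanning-tree results from \cite{spanning} to argue that merging cost is additive up to controlled interactions at the interface, so that a recurrence of the form (number of moves to flood up to $b_2$) $=$ (number to flood up to $b_1$) $+$ (local merging cost) is valid, after optimising over the interface colour-subset and the intermediate border. Establishing that this additive/Markovian decomposition is \emph{correct} — i.e.\ that no optimal global sequence is lost by forcing it into this border-respecting form, and that the colour-subset state genuinely captures all the information needed to reconstruct optimal continuations — is what I expect to be the \textbf{main obstacle}. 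The danger is that an optimal sequence might make a move deep inside the left region that only ``pays off'' when combined with moves far to the right, violating the locality the DP assumes; ruling this out rigorously (probably via an exchange argument that reorders moves along the border ordering without increasing the total count) is the crux of the proof, and is presumably where the spanning-tree machinery does the heavy lifting.
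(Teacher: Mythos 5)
Your overall architecture --- a dynamic program indexed by pairs of borders, carrying endpoint/interface data and a subset of the colour-set, underpinned by the spanning-tree theorem from \cite{spanning} --- does match the shape of the paper's algorithm, which computes a function $f(b_1,b_2,r_1,r_2,d,I)$ over sections $B[b_1,b_2]$. But your proposal has a genuine gap exactly where you flag it, and the paper's way of closing it is not the exchange/reordering argument you anticipate. The missing ingredient is a structural result (Lemma \ref{leafy-path}): for every connected induced subgraph $H$ of a $2 \times n$ board and every target colour $d$ there is a $d$-minimal spanning tree $T$ that is a caterpillar whose spine runs between the leftmost and rightmost columns, i.e.\ $\bare(T) \subseteq P(T,u,w)$; combined with Lemma \ref{no-leaf-moves} (an optimal sequence never needs to play a move at a leaf), this is what makes a bounded DP state legitimate. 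Your justification that ``the boundary data is bounded'' because the board has height $2$ is insufficient: the state must summarise the entire history of play on one side of the cut, not the static colours at the cut, and a priori monochromatic components straddling a border evolve in complicated, time-dependent ways. The caterpillar lemma eliminates this: the only object crossing any border is a single spine edge, all moves are played on the spine, and leaves matter only through which colours remain unabsorbed. Its proof in the paper is a delicate induction on $m(H,\omega,d)$ with case analysis on the last move, supported by Lemma \ref{compatible-ordering}, Proposition \ref{strong-non-int} and Lemma \ref{change-colouring}; it occupies most of Section \ref{background} and is the heart of the result, not a routine reordering of moves.

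There are also two concrete mismatches in the DP itself. First, the semantics of your colour-subset state: in the paper $I$ is not the set of colours ``still active or present'' but the set of colours \emph{excused} from absorption --- $f(b_1,b_2,r_1,r_2,d,I)$ counts moves played on the spine that flood the spine with colour $d$ while linking to it every leaf except possibly those with colours from $I$; excused leaves are swallowed later, when a subsequent recolouring of the spine cycles through their colours. Second, your recurrence is purely accretive (extend the flooded prefix rightwards by merging the next section), whereas the paper needs two transitions: $f_2$, which splits a section at an intermediate border crossed by a single edge $x_1x_2$ with both halves flooded to the \emph{same} colour $d$, and $f_1$, which recolours an already-monochromatic spine from $d'$ to $d$ at cost one while moving $d$ into the excused set. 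The $f_1$ transition has no counterpart in your scheme, and without it you cannot represent optimal sequences whose final move recolours the entire spine (these correspond directly to the case analysis in Lemma \ref{f*<=f}), so your recurrence as stated would be incomplete even granting the structural lemma. The complexity count then follows as in Proposition \ref{complexity}: $O(n^6 2^c)$ table entries, $O(cn^3)$ work per update, and $O(n^2)$ update rounds ordered by the potential $\theta$, giving $O(n^{11} 2^c)$.
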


We begin with some background and auxiliary results in Section \ref{background}, and then describe the algorithm in Section \ref{algorithm}.  The algorithm is based on the fact, proved in Section \ref{background}, that if $G$ (with colouring $\omega$) is a graph corresponding to a $2 \times n$ board, then $G$ has a spanning tree $T$ such that
\begin{enumerate}
\item $\bare(T)$ is a path,
\item $m_T(T,\omega) = m_G(G,\omega)$, and
\item there is an optimal sequence to flood $T$ in which all moves are played in $\bare(T)$.
\end{enumerate}
We make use of this fact to show that, in order to compute the number of moves required to flood $G$, we can instead consider the number of moves required to flood some appropriately chosen paths (while keeping track of the effect that flooding the paths has on vertices that do not lie on the path).

\subsection{Background and auxiliary results}
\label{background}

Before describing our algorithm in the next section, we need a number of results which will be used to prove its correctness.  We begin with some previous results from \cite{spanning}.  Meeks and Scott prove that it suffices to consider spanning trees in order to determine the minimum number of moves required to flood a graph.  For any connected graph $G$, let $\mathcal{T}(G)$ denote the set of all spanning trees of $G$. 

\begin{thm}
Let $G$ be a connected graph with colouring $\omega$ from colour-set $C$.  Then, for any $d \in C$,
$$m(G,\omega,d) = \min_{T \in \mathcal{T}(G)} m(T,\omega,d).$$
\label{spanning-tree}
\end{thm}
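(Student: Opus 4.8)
My plan is to prove the two inequalities separately. Writing $k = m(G,\omega,d)$, I will show (a) that $\min_{T \in \mathcal{T}(G)} m(T,\omega,d) \leq k$, by exhibiting a \emph{single} spanning tree floodable to $d$ in at most $k$ moves, and (b) that $m(G,\omega,d) \leq m(T,\omega,d)$ for \emph{every} spanning tree $T$, which immediately gives $m(G,\omega,d) \leq \min_T m(T,\omega,d)$. Together these yield the claimed equality, so the work is in establishing the two bounds.

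For (a) I would argue by induction on $k$. After contracting the monochromatic components of $\omega$ — this does not change $m(G,\omega,d)$, and a spanning tree of the contracted graph lifts to one of $G$ by adding edges internal to monochromatic components, which never affect the number of flooding moves — I may assume $\omega$ is proper. If $k=0$ then $G$ is already monochromatic in colour $d$ and any spanning tree suffices. Otherwise fix an optimal sequence $m_1,\dots,m_k$ with $m_1 = (v,d_1)$, let $\omega_1$ be the colouring obtained after playing $m_1$ in $G$, and let $M$ be the monochromatic component of $v$ in $(G,\omega_1)$, namely $v$ together with its neighbours coloured $d_1$. By optimality $m(G,\omega_1,d)=k-1$, and since $M$ is monochromatic in $\omega_1$, contracting it yields a strictly smaller instance with flood number $k-1$. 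Applying the induction hypothesis gives a spanning tree $T'$ of the contracted graph with $m(T',\omega_1',d)\leq k-1$, where $\omega_1'$ is the induced colouring. I then lift $T'$ to a spanning tree $T$ of $G$ by expanding the contracted vertex into the star joining $v$ to each of its $d_1$-coloured neighbours. Because $\omega$ is proper, each such neighbour has no further $d_1$-coloured neighbour, so playing $m_1$ in $T$ floods \emph{exactly} $M$ and reproduces $\omega_1$; contracting $M$ returns to $(T',\omega_1')$, and hence $m(T,\omega,d)\leq 1+(k-1)=k$.

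For (b) I would prove the more general monotonicity statement that $m(G,\omega,d)\leq m(H,\omega,d)$ whenever $H$ is a connected spanning subgraph of $G$ (a spanning tree being the extreme case); one can apply this directly to $T$, or peel off the extra edges of $G$ one at a time and treat single-edge additions. The natural approach is induction on $\ell=m(H,\omega,d)$: take an optimal $H$-sequence, play its first move $(v,d_1)$ in both $H$ and $G$, and recurse on the resulting coloured graphs. The subtlety is that the component of $v$ in $G$ may strictly contain the component of $v$ in $H$, so the move \emph{over-floods} in $G$ and the two colourings diverge, with $G$ having recoloured extra vertices to $d_1$. The invariant I would carry is that the partition of $V$ into monochromatic components in $H$ always \emph{refines} the corresponding partition in $G$; this does hold immediately after the first move, since the extra vertices flooded in $G$ all lie in the single new $d_1$-component, and it is exactly what is needed to keep the induction running and to conclude that $G$ is flooded to $d$ once $H$ is.

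The \textbf{main obstacle} is showing that this refinement invariant is preserved as the induction proceeds. It is \emph{not} true that replaying an arbitrary flooding sequence of $H$ in $G$ keeps $G$ at least as flooded: a move can merge two components in $H$ by matching an \emph{old} colour that $G$ has already overwritten during an earlier over-flood, at which point the invariant breaks (small explicit examples on cycles exhibit this). The resolution is to exploit that we replay an \emph{optimal}, and in particular non-wasteful, sequence for $H$, so that every move genuinely merges distinct components; under this hypothesis I would show that each move extends the $G$-colouring consistently with the $H$-colouring, so that the refinement relation — and hence the bound $m(G,\omega,d)\leq \ell$ — survives to the end. Verifying this preservation, by a careful case analysis of how a single move acts on the two partitions, is where essentially all of the difficulty lies.
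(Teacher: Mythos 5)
Your direction (a) is fine: the first-move-plus-contraction induction is correct as sketched (properness guarantees that playing $m_1$ in the lifted tree $T$ floods exactly $M$, since any outside vertex of colour $d_1$ adjacent in $T$ to a vertex of $M$ would violate properness), and it is in fact a genuinely different, arguably more elementary construction than the one in \cite{spanning}, which obtains the tree by decomposing on the \emph{last} move of an optimal sequence and gluing subtrees via a disjoint-subtree version of Corollary \ref{non-interference}. The gap is in direction (b). You correctly identify that naive replay fails, but the rescue you propose — that optimality yields the property that every move genuinely merges distinct components, and that this property suffices to preserve the refinement invariant — is not just unverified (you concede the verification is ``where essentially all of the difficulty lies''); the bridging hypothesis is demonstrably \emph{insufficient}. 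Take $H$ the path $0\!-\!1\!-\!2\!-\!3\!-\!4$ with colours $c,a,b,a,b$, and $G = H$ plus the chord $\{1,4\}$. The sequence $S = (1,b),(1,c),(3,b),(4,c)$ floods $H$ (colourings $cbbab$, $cccab$, $cccbb$, $ccccc$), and \emph{every} move of $S$ merges two distinct components of $H$. Replayed in $G$: the move $(1,b)$ drags vertex $4$ into the component $\{1,2,4\}$ via the chord, $(1,c)$ recolours it to $c$, so $(3,b)$ merges nothing in $G$ and $(4,c)$ is played in the already-$c$ component $\{0,1,2,4\}$; the final $G$-colouring is $c,c,c,b,c$, not monochromatic. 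So ``non-wasteful'' is not the right consequence of optimality to extract, and the local case analysis you envisage cannot close the argument from that hypothesis alone: ruling out this failure mode genuinely requires using global optimality of $S$ (my example has $|S| = 4 > 3 = m(H,\omega)$), and no mechanism for doing so is offered.

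For comparison, the proof in \cite{spanning} avoids replay entirely in this direction as well: it inducts on the length of an optimal sequence for the tree $T$, splits $T$ at the components existing before the last move, and recombines the pieces using the partition lemma for trees (the disjoint special case of Corollary \ref{non-interference}, proved by the compatible-ordering argument that appears in extended form as Lemma \ref{compatible-ordering} above), together with the already-established direction (a) applied to the pieces. That structural route is what makes the inequality $m(G,\omega,d) \leq m(T,\omega,d)$ go through; some such tool — a subadditivity lemma rather than a move-by-move simulation — appears unavoidable, precisely because of examples like the one above. As it stands, your proposal establishes (a) but leaves (b), the harder half, open.
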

For any $d \in C$, we say that $T$ is a \emph{$d$-minimal} spanning tree for $G$  (with respect to $\omega$) if $m(T,\omega,d) = m(G,\omega,d)$.

In the remainder of this section, we prove that in the special case in which $G$ corresponds to a $2 \times n$ board, there is always a $d$-minimal spanning tree $T$ such that $\bare(T)$ is a path.

In doing so, and in proving the correctness of our algorithm in the next section, we make use of a corollary of Theorem \ref{spanning-tree}, again proved in \cite{spanning}, which shows that the number of moves required to flood a graph is bounded above by the sum of the numbers of moves required to flood connected subgraphs which cover the vertex-set.

\begin{cor}
Let $G$ be a connected graph, with colouring $\omega$ from colour-set $C$, and let $A$ and $B$ be subsets of $V(G)$ such that $V(G) = A \cup B$ and $G[A], G[B]$ are connected.  Then, for any $d \in C$,
$$m(G,\omega,d) \leq m(A,\omega,d) + m(B,\omega,d).$$
\label{non-interference}
\end{cor}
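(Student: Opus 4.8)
The plan is to apply Theorem~\ref{spanning-tree} twice: once to pass from the subgraphs to convenient spanning trees, and once to pass back from a combined tree to $G$. First I would invoke Theorem~\ref{spanning-tree} for the connected graphs $G[A]$ and $G[B]$ to obtain $d$-minimal spanning trees $T_A \in \mathcal{T}(G[A])$ and $T_B \in \mathcal{T}(G[B])$, so that $m(T_A,\omega,d) = m(A,\omega,d)$ and $m(T_B,\omega,d) = m(B,\omega,d)$. Since $m(G,\omega,d) = \min_{T \in \mathcal{T}(G)} m(T,\omega,d) \leq m(T,\omega,d)$ for every spanning tree $T$ of $G$, it then suffices to construct a single spanning tree $T$ of $G$, together with a flooding sequence of length at most $m(T_A,\omega,d) + m(T_B,\omega,d)$ that makes $T$ monochromatic in colour $d$.

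To build $T$, I would use that $V(G) = A \cup B$ and that $G$ is connected. The union $T_A \cup T_B$ already spans $V(G)$; if $A \cap B \neq \emptyset$ it is connected, and otherwise I add a single edge of $G$ joining $A$ to $B$ (one exists by connectivity of $G$). In either case I take $T$ to be a spanning tree of $G$ lying inside this connected spanning subgraph, so that every edge of $T$ is an edge of $T_A$, an edge of $T_B$, or the one added crossing edge. The candidate flooding sequence is obtained by concatenating an optimal sequence $S_A$ witnessing $m(T_A,\omega,d)$ with an optimal sequence $S_B$ witnessing $m(T_B,\omega,d)$, and playing $S_A$ followed by $S_B$ in $T$; this uses exactly $m(T_A,\omega,d) + m(T_B,\omega,d)$ moves.

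The heart of the argument, and the step I expect to be the main obstacle, is to show that this concatenated sequence really does flood $T$, with no extra moves incurred by interference between the two parts. The point to establish is a monotonicity (\emph{non-interference}) principle: enlarging the host tree can only \emph{merge} monochromatic components, so a sequence that floods one part to colour $d$ in isolation still drives that part to colour $d$ when played inside the larger tree $T$, and does so at least as fast. The delicate issue is that playing $S_A$ inside $T$ may recolour vertices of $B$, so that $S_B$ no longer starts from the colouring $\omega$ restricted to $B$; one must check that these cross-recolourings only ever advance $B$ towards being monochromatic in $d$ and never undo progress, so that $S_B$ still completes the flood of the remainder in at most $m(T_B,\omega,d)$ further moves while leaving $A$ monochromatic in $d$. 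I would formalise this by induction on the moves, maintaining the invariant that the monochromatic partition of each part inside $T$ is always a coarsening of the partition it would have in isolation, with the common target colour $d$ preserved. Once this is in place, after both sequences have been played every vertex of $T$ has colour $d$, and since these vertices cover $A \cup B = V(G)$ this gives $m(T,\omega,d) \leq m(T_A,\omega,d) + m(T_B,\omega,d)$; Theorem~\ref{spanning-tree} then yields the claimed bound on $m(G,\omega,d)$.
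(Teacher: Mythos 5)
There is a genuine gap, and it sits exactly at the point you yourself flag as ``the heart of the argument'': the monotonicity/non-interference principle you propose to formalise is false. It is not true that playing $S_A$ first only merges components and that $S_B$ then still floods $B$; nor does your invariant hold that the monochromatic partition of $B$ inside $T$ remains a coarsening of the partition in isolation with colours preserved. Concretely, take $T$ to be the path $a - x - v$ with $A = \{a\}$, $B = \{x,v\}$, $\omega(a) = \omega(x) = 1$, $\omega(v) = 2$, and $d = 3$. Then $S_A = ((a,3))$ is optimal for $A$, and $S_B = ((v,1),(x,3))$ is an optimal sequence flooding $B$ with colour $3$ in isolation. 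Playing $S_A$ in $T$ recolours the component $\{a,x\}$ to $3$; the subsequent move $(v,1)$ gives $v$ colour $1$ but no longer merges it with $x$ (which now has colour $3$), and the final move $(x,3)$ recolours only $\{a,x\}$. The concatenation leaves $v$ with colour $1$, so $T$ is not flooded, and after $(v,1)$ the set $\{x,v\}$ is monochromatic in isolation but split across two differently-coloured components in $T$ --- precisely the coarsening invariant failing. (The bound itself still holds here, since two moves flood $T$; it is your proof strategy, quantified over arbitrary optimal $S_A$, $S_B$, that collapses.) This is why the paper's source \cite{spanning} does not argue by concatenation: the tree-and-disjoint-parts case is proved by a delicate induction --- in the same spirit as Lemma \ref{compatible-ordering} in this paper, which needs extra hypotheses about moves changing the colour of the boundary vertex $x$ --- splitting $S_B$ at the move that first makes $B$ monochromatic, partitioning the remaining moves among monochromatic components, and adding components one at a time; in general the two sequences must be interleaved or modified, not played back to back.

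A secondary, repairable flaw: when $A \cap B \neq \emptyset$, you pass to a spanning tree $T$ inside $T_A \cup T_B$, but deleting edges to break cycles can remove edges of $T_A$, so $T[A]$ need not contain $T_A$ and may not even be connected; then $S_A$ has no reason to behave as intended even on the $A$-side. Note also that the paper itself does not reprove Corollary \ref{non-interference} --- it is quoted from \cite{spanning} as a consequence of Theorem \ref{spanning-tree} --- so the expected route is: prove the disjoint-subtree special case on trees by induction, derive Theorem \ref{spanning-tree}, and only then deduce the corollary for general $G$, rather than attempting a direct two-phase strategy on a single spanning tree.
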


A key step used to prove Theorem \ref{spanning-tree} in \cite{spanning} is to prove a special case of Corollary \ref{non-interference}, where the underlying graph $G$ is a tree and $A$ and $B$ are disjoint.  We will need the following result, proved using an extension of part of this proof from \cite{spanning}.

\begin{lma}
Let $T$ be a tree, with colouring $\omega$ from colour-set $C$, let $A$ and $B$ be disjoint subsets of $V(T)$ such that $V(T) = A \cup B$ and $T[A], T[B]$ are connected, and let $x$ be the unique vertex of $B$ with a neighbour in $A$.  Suppose that 
\begin{itemize}
\item the sequence $S_A$ floods $T[A]$ with colour $d_A$,
\item the sequence $S_B$ floods $T[B]$ with colour $d_B$,  
\item at least one move of $S_B$ changes the colour of $x$, and
\item  playing $S_A$ in $T$ changes the colour of $x$.
\end{itemize}
Then
$$m(T,\omega,d_B) \leq |S_A| + |S_B|.$$
\label{compatible-ordering}
\end{lma}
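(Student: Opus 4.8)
The plan is to reduce the claim to the observation that the naive bound coming from Corollary \ref{non-interference} overshoots by exactly one move, and then to save that move by exploiting the two hypotheses about $x$. Write $y$ for the unique neighbour of $x$ in $A$, so that $xy$ is the only edge of $T$ joining $A$ to $B$; consequently every path of $T$ between two vertices of $A$ stays inside $A$, and likewise for $B$. Applying Corollary \ref{non-interference} to the partition $V(T)=A\cup B$ with the single target colour $d_B$ gives $m(T,\omega,d_B)\le m(A,\omega,d_B)+m(B,\omega,d_B)$. Since $S_B$ floods $T[B]$ with $d_B$ we have $m(B,\omega,d_B)\le|S_B|$, and appending one recolouring move to $S_A$ shows $m(A,\omega,d_B)\le|S_A|+1$ (if $d_A=d_B$ then $m(A,\omega,d_B)\le|S_A|$ and the result is immediate, so assume $d_A\ne d_B$). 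Thus the naive bound is $|S_A|+|S_B|+1$, and the whole content of the lemma is to remove this surplus move.

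To remove it, I would analyse the concatenated sequence $S_A$ followed by $S_B$, played in the whole tree $T$. First I would show that playing $S_A$ in $T$ still floods $A$ with $d_A$: because the only $A$--$B$ edge is $xy$, an easy induction (the colouring induced on $A$ evolves identically whether $S_A$ is played in $T$ or in $T[A]$) shows that $A$ ends monochromatic in colour $d_A$. By hypothesis, playing $S_A$ in $T$ changes the colour of $x$; once $x$ first joins the flooded component through $y$ it shares $y$'s colour and thereafter follows every recolouring of that component, so after $S_A$ the set $A\cup\{x\}$ is a single monochromatic region of colour $d_A$. The saving is now visible: when the guaranteed $x$-changing move of $S_B$ recolours the component of $x$, that component contains all of $A$ (linked to $x$ in colour $d_A$), so this one move sweeps $A$ along at no extra cost; $A$ then follows $x$ for the remainder of $S_B$ and finishes in colour $d_B$, which is precisely the move we needed to economise.

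The main obstacle is the \emph{spill} of $S_A$ into $B$: playing $S_A$ in $T$ may also recolour a connected blob $B'\ni x$ of $B$ to colour $d_A$, so that $S_B$ no longer starts from $\omega$ on the $B$-side. What must be established is that $S_B$ nevertheless floods this modified $B$-side to $d_B$ in at most $|S_B|$ moves. Here I would extend the tree non-interference argument of \cite{spanning}: the spilled region is a single monochromatic component containing $x$, and the $A$--$B$ interaction is confined to the edge $xy$, so the effect of the spill is only to merge some vertices of $B$ into one colour class, which can be tracked move-by-move along the play of $S_B$ without lengthening it; the hypothesis that some move of $S_B$ changes $x$ is exactly what guarantees that this merged class is absorbed correctly and that the final colour of $x$, and hence of $A$, is $d_B$. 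Combining the two phases then floods $T$ with $d_B$ in $|S_A|+|S_B|$ moves. I expect this spill-control step to be the delicate part, since it is precisely the point at which the differing target colours $d_A\ne d_B$ obstruct a direct appeal to the same-colour non-interference result.
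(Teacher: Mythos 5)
Your first two paragraphs are sound: the reduction to the case $d_A\neq d_B$, the fact that the colouring on $A$ evolves identically whether $S_A$ is played in $T$ or in $T[A]$, and the conclusion that after playing $S_A$ in $T$ the set $A\cup\{x\}$ is a single monochromatic region of colour $d_A$ are all correct. But the spill-control step you flag as ``delicate'' is not merely delicate: the statement you need there is false, and with it the whole strategy of playing $S_A$ in full and then $S_B$ verbatim. The spill does not ``only merge some vertices of $B$ into one colour class''; it \emph{recolours} a region of $B$ containing $x$ to $d_A$, which can destroy a colour coincidence that $S_B$ relied on to capture $x$. Concretely, let $T$ be the path $y-x-b_1-b_2-b_3$ with $A=\{y\}$, $B=\{x,b_1,b_2,b_3\}$, and $\omega$ giving these five vertices colours $5,1,2,1,3$ respectively. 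Take $S_A=((y,1),(y,4))$, so $d_A=4$ and playing $S_A$ in $T$ changes the colour of $x$; take $S_B=((b_1,1),(b_1,3))$, which floods $T[B]$ with $d_B=3$, its second move changing the colour of $x$ (the first move links $b_1$ to both $x$ and $b_2$ through their common colour $1$). All hypotheses of the lemma hold. But after $S_A$ is played in $T$, the vertex $x$ has colour $4$, so $(b_1,1)$ now merges $b_1$ only with $b_2$, no move of $S_B$ ever touches $x$, and the concatenation ends with $\{y,x\}$ coloured $4$ against $\{b_1,b_2,b_3\}$ coloured $3$ --- it is not a flooding sequence. Worse, after $S_A$ the board consists of four components with pairwise distinct adjacent colours, so each further move reduces the number of components by at most one and \emph{no} completion in $|S_B|=2$ extra moves exists: any correct proof must abandon the fixed order ``$S_A$ first, then all of $S_B$''.

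That reordering is exactly what the paper's proof supplies, and it is the idea your proposal is missing. The paper argues by induction on $|B|$, cutting $S_B$ at the prefix $S_B'$ that first makes $B$ monochromatic. If $S_B'$ does not change the colour of $x$, it plays $S_B'$ \emph{before} $S_A$, floods everything with $d_A$, and pays one extra recolouring move, affordable because $|S_B'|<|S_B|$. If $S_B'$ does change the colour of $x$, it decomposes $B$ into the monochromatic components $B_1,\dots,B_r$ present just before the last move of $S_B'$, applies the inductive hypothesis to $T[A\cup B_1]$ together with the submoves of $S_B'$ played in $B_1$, and attaches the remaining components via Corollary \ref{non-interference}. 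In the example above this machinery produces the interleaved sequence $(b_1,1),(y,1),(y,4)$ followed by a recolouring to $3$, meeting the bound $|S_A|+|S_B|=4$. So the gap is not a technical detail to be patched by ``tracking the spill move-by-move along $S_B$'' --- no such tracking can succeed, because the modified colouring genuinely requires the moves of $S_A$ and $S_B$ to be interleaved, which only an inductive or reordering argument of the paper's kind can deliver.
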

\begin{proof}
We proceed by induction on $|B|$.  Note that we may assume without loss of generality that $\omega$ gives a proper colouring of $B$; otherwise we may contract monochromatic components.  Suppose $|B| = 1$.  Then $S_A$ must change the colour of the only vertex in $B$ (linking it to some $a \in A$), and so playing $S_A$ in $T$ makes the whole tree monochromatic with colour $d_A$.  Thus $m(T,\omega,d_A) \leq |S_A|$, and
$$m(T,\omega,d_B) \leq m(T,\omega,d_A) + 1 \leq |S_A| + 1 \leq |S_A| + |S_B|,$$
as required, since by assumption $|S_B| \geq 1$.

Now suppose $|B| > 1$, so $B$ is not monochromatic initially, and assume that the result holds for smaller $B$.  Set ${S_B}'$ to be the initial segment of $S_B$, up to and including the move that first makes $B$ monochromatic (in any colour $d'$), so any final moves that simply change the colour of $B$ are omitted.  We may, of course, have ${S_B}' = S_B$ (and so $d' = d_B$), if $B$ is not monochromatic before the final move of $S_B$.

First suppose that $S_B'$ does not change the colour of $x$ (which is only possible in the case $S_B' \neq S_B$).  Then playing $S_B'$ in $T$ to make $B$ monochromatic cannot change the colour of any vertex in $A$, so if we play $S_B'$ in $T$ and then play $S_A$, this will still flood $A$ with colour $d_A$.  Moreover, as playing $S_B'$ has not changed the colour of $x$, playing $S_A$ will still change the colour of $x$, thus linking all of $B$ to $A$ and so flooding $T$ with colour $d_A$.  Hence, in this case, we have 
$$m(T,\omega,d_A) \leq |S_B'| + |S_A|,$$
and so, as we must in this case have $|S_B'| < |S_B|$,
$$m(T,\omega,d_B) \leq 1 + m(T,\omega,d_A) \leq 1 + |S_B'| + |S_A| \leq |S_A| + |S_B|,$$
as required. 

Suppose now that $S_B'$ does change the colour of $x$.  Before the final move of ${S_B}'$ there are $r \geq 2$ monochromatic components in $B$ (all but one of which have colour $d'$), with vertex-sets $B_1, \ldots, B_r$.  For $1 \leq i \leq r$, set $S_i$ to be the subsequence of ${S_B}'$ consisting of moves played in $B_i$, and note that these subsequences partition ${S_B}'$.  Observe also that playing $S_i$ in $T[B_i]$ gives $B_i$ colour $d'$, so $m(B_i,\omega,d') \leq |S_i|$.

Let $B_1$ be the unique component adjacent to $A$, and set $T_1 =  T[A \cup B_1]$.  Note that $S_A$ floods $T_1[A]$ with colour $d_A$, and $S_1$ floods $T_1[B_1]$ with colour $d'$.  Moreover, as playing $S_A$ in $T$ changes the colour of $x$, playing $S_A$ in $T_1$ must also change the colour of $x$.  Also, at least one move from $S_B$ changes the colour of $x$, the unique vertex of $B_1$ with a neighbour in $A$, and this move must belong to $S_1$.  Thus we can apply the inductive hypothesis to see that 
$$m(T_1, \omega, d') \leq |S_A| + |S_1|.$$
Now suppose without loss of generality that $B_2$ is adjacent to $B_1$.  We can then apply Corollary \ref{non-interference} to $T_2 = T[V(T_1) \cup B_2]$ to see that
$$m(T_2, \omega, d') \leq m(T_1,\omega,d') + m(B_2,\omega,d') \leq |S_A| + |S_1| + |S_2|.$$
Continuing in this way, each time adding an adjacent component, we see that
$$m(T,\omega,d') \leq |S_A| + \sum_{i=1}^r |S_i| = |S_A| + |{S_B}'|.$$
Now, if ${S_B}' = S_B$, this immediately gives the desired result, as $d' = d_B$.  Otherwise, note that $|S_B| \geq |{S_B}'|+1$ and so
$$m(T,\omega,d_B) \leq m(T,\omega,d') + 1 \leq |S_A| + |{S_B}'| + 1 \leq |S_A| + |S_B|,$$
as required.
\end{proof}

In the next result, we exploit this lemma to give a strengthening of Corollary \ref{non-interference} under additional assumptions.  This can be applied to show that, in certain situations, we may assume that \emph{no} optimal sequence to flood a subtree can change the colour of any vertex outside the subtree, when played in a larger tree.

\begin{prop}
Let $T$ be a tree, with colouring $\omega$ from colour-set $C$, and let $X$ and $Y$ be vertex-disjoint sutbrees of $T$ such that $T[V(X) \cup V(Y)]$ is connected, and such that
\begin{itemize}
\item there is a sequence $S_X$ of $\alpha$ moves that floods $X$ with some colour $d' \in C$,
\item there is a sequence $S_Y$ of $\beta$ moves that floods $Y$ with colour $d$, and that changes the colour of the unique neighbour of $X$ in $Y$, and
\item playing $S_X$ in $T$ changes the colour of at least one vertex in $Y$.
\end{itemize}
Then, setting $T' =  T \setminus (V(X) \cup V(Y))$, we have 
$$m(T,\omega,d) \leq m(T',\omega',d) + \alpha + \beta.$$
\label{strong-non-int}
\end{prop}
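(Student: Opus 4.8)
The plan is to flood all of $T$ with colour $d$ in two stages: first flood the connected ``core'' $\hat{T} := T[V(X) \cup V(Y)]$ with colour $d$, and then reattach the remaining pieces of $T$ one at a time using Corollary~\ref{non-interference}. Since $\hat T$ is a connected subtree and $X,Y$ are themselves subtrees, there is exactly one edge $wv$ joining them (with $w \in V(X)$), so $v$ really is the unique vertex of $Y$ adjacent to $X$, and moreover $v$ lies on the unique $\hat T$-path from any vertex of $X$ to any vertex of $Y$.

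For the first stage I would apply Lemma~\ref{compatible-ordering} to $\hat T$, taking $A = V(X)$, $B = V(Y)$, $x = v$, $S_A = S_X$ (flooding $X$ with $d'$) and $S_B = S_Y$ (flooding $Y$ with $d$). Three of the four hypotheses are immediate from the assumptions; the only one needing work is that playing $S_X$ in $\hat T$, rather than in the full tree $T$, changes the colour of $v$. This is the step I expect to be the main obstacle, since the assumption we are handed speaks about playing $S_X$ in $T$, and in the larger graph $T$ monochromatic components can be strictly larger, so a priori $S_X$ might recolour $v$ in $T$ but not in $\hat T$.

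To rule this out I would show that the vertices of $\hat T$ undergo exactly the same sequence of recolourings whether $S_X$ is played in $\hat T$ or in $T$. The key observation is that all moves of $S_X$ are played in $X$, and that $T \setminus V(\hat T)$ is a forest each of whose components is attached to $\hat T$ at a single cut vertex; such pendant pieces can be swept into a recoloured component, but they can never link two vertices of $\hat T$ that are not already linked within $\hat T$. Hence at every move the active monochromatic component has the same intersection with $V(\hat T)$ in the two runs, so $v$ (together with the rest of $\hat T$) is recoloured at precisely the same moves. It then remains to see that $v$ is recoloured at all: since playing $S_X$ in $T$ changes the colour of some vertex of $Y$, the first move to do so recolours a monochromatic component containing that $Y$-vertex together with the active $X$-vertex, and this component contains the $\hat T$-path joining them, on which $v$ lies; as the component is being given a colour different from its current one, $v$'s colour is changed. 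Thus $v$ is recoloured in the $T$-run, hence also in the $\hat T$-run, supplying the missing hypothesis, and Lemma~\ref{compatible-ordering} gives $m(\hat T, \omega, d) \leq \alpha + \beta$.

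For the second stage, list the components of the forest $T' = T \setminus (V(X) \cup V(Y))$ as $C_1,\ldots,C_k$, each meeting $\hat T$, and set $T_0 = \hat T$ and $T_j = T[V(\hat T) \cup V(C_1) \cup \cdots \cup V(C_j)]$. Each $T_j$ is connected, so Corollary~\ref{non-interference} applied with the covering $V(T_{j-1}) \cup V(C_j)$ yields $m(T_j,\omega,d) \leq m(T_{j-1},\omega,d) + m(C_j,\omega,d)$. Telescoping over $j$, and using that flooding a forest decomposes over its components so that $\sum_j m(C_j,\omega,d) = m(T',\omega',d)$, I obtain $m(T,\omega,d) \leq m(\hat T,\omega,d) + m(T',\omega',d) \leq \alpha + \beta + m(T',\omega',d)$, as required. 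The only routine checks remaining are that the induced subgraphs entering each use of Corollary~\ref{non-interference} are connected, which holds because every $C_i$ meets $\hat T$.
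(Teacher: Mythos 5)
Your proposal is correct and takes essentially the same route as the paper: apply Lemma \ref{compatible-ordering} to the subtree $T[V(X) \cup V(Y)]$ (with $A = V(X)$, $B = V(Y)$, $x = v$) to get $m(T[V(X) \cup V(Y)],\omega,d) \leq \alpha + \beta$, then attach the remainder via Corollary \ref{non-interference}. The two points you spell out --- that playing $S_X$ recolours $v$ in the subtree $T[V(X) \cup V(Y)]$ and not merely in $T$ (via the invariant that the runs agree on the subtree, which holds because paths between subtree vertices in a tree stay inside the subtree), and the component-by-component treatment of the possibly disconnected forest $T'$ --- are precisely the details the paper's two-line proof leaves implicit, as it simply asserts ``$S_X$ must change the colour of $v$'' and applies the corollary in one step.
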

\begin{proof}
Note that $S_X$ must change the colour of $v$, so we can apply Lemma \ref{compatible-ordering} to see that
$$m(T[V(X) \cup V(Y)],\omega,d) \leq |S_X| + |S_Y| = \alpha + \beta.$$
Corollary \ref{non-interference} then gives
\begin{align*}
m(T,\omega,d) & \leq m(T[V(X) \cup V(Y)], \omega,d) + m(T',\omega',d) \\
              & \leq m(T',\omega',d) + \alpha + \beta,
\end{align*}
as required.
\end{proof}

Before proving the main result of this section, we need one further result, relating the number of moves required to flood the same graph with different initial colourings.

\begin{lma}
Let $G$ be a connected graph, and let $\omega$ and $\omega'$ be two colourings of the vertices of $G$ (from colour-set $C$).  Let $\mathcal{A}$ be the set of all maximal monochromatic components of $G$ with respect to $\omega'$, and for each $A \in \mathcal{A}$ let $c_A$ be the colour of $A$ under $\omega'$.  Then, for any $d \in C$,
$$m(G,\omega,d) \leq m(G,\omega',d) + \sum_{A \in \mathcal{A}} m(A,\omega,c_A).$$
\label{change-colouring}
\end{lma}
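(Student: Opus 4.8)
The plan is to reduce the general claim to a sequence of single-component recolourings, applying Corollary~\ref{non-interference} repeatedly. The idea is that to flood $G$ starting from $\omega$ with colour $d$, I can first transform $\omega$ into $\omega'$ (one component at a time), and then flood $G$ starting from $\omega'$ with colour $d$ using $m(G,\omega',d)$ moves. The cost of transforming $\omega$ into $\omega'$ is what the sum $\sum_{A \in \mathcal{A}} m(A,\omega,c_A)$ is meant to capture: for each monochromatic component $A$ of $G$ with respect to $\omega'$, we recolour the corresponding vertices (carrying their $\omega$-colours) to the single colour $c_A$, which costs at most $m(A,\omega,c_A)$ moves.

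More precisely, I would first flood each $A \in \mathcal{A}$ to its target colour $c_A$, treating $A$ as an induced connected subgraph carrying the colouring inherited from $\omega$; this uses at most $\sum_{A} m(A,\omega,c_A)$ moves in total and, after these moves have been played throughout $G$, every vertex of $G$ has exactly the colour assigned to it by $\omega'$. In other words, the colouring of $G$ after this phase is precisely $\omega'$. Then, by definition of $m(G,\omega',d)$, a further $m(G,\omega',d)$ moves suffice to make all of $G$ colour $d$. Concatenating the two phases gives a valid flooding sequence of total length at most $m(G,\omega',d) + \sum_{A \in \mathcal{A}} m(A,\omega,c_A)$, which is the desired bound.

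The formal tool for combining these costs is Corollary~\ref{non-interference}: viewing $G$ as the union of its (connected, induced) subgraphs $A$ together with the whole graph under $\omega'$, I would apply the corollary inductively over the components of $\mathcal{A}$ to bound the number of moves needed to reach a graph that, after the recolouring phase, behaves exactly like $(G,\omega')$. The cleanest formulation is to define an intermediate colouring obtained by recolouring all of $G$ according to the $\omega$-to-$c_A$ flooding on each $A$, observe that this intermediate colouring coincides with $\omega'$, and then add $m(G,\omega',d)$ for the final flood.

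The main obstacle I anticipate is justifying that the moves used to flood each $A$ (to colour $c_A$) can be carried out on the subgraph $A$ \emph{in isolation} and then honestly re-played inside $G$ without the moves in different components interfering with one another, and without accidentally linking components that should stay distinct under $\omega'$. Since the $A \in \mathcal{A}$ are exactly the monochromatic components of $(G,\omega')$, they are vertex-disjoint and not initially linked to each other after the recolouring phase, so the subtlety is mostly bookkeeping: I must ensure that a move played inside $A$ only ever affects vertices of $A$ during this phase, which follows because the recolourings on distinct components proceed independently and each $A$ is an induced connected subgraph. Handling this interference issue carefully — most naturally by an induction that peels off one component at a time via Corollary~\ref{non-interference}, exactly as in the proof of Proposition~\ref{strong-non-int} — is the crux, while the remaining steps are routine.
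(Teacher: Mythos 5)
Your two-phase plan breaks down at phase one, and the obstacle you flag as ``mostly bookkeeping'' is in fact fatal: the intermediate colouring $\omega'$ need not be reachable from $\omega$ at all. Flooding moves can only merge or recolour monochromatic components, never split them, and the members of $\mathcal{A}$ are monochromatic with respect to $\omega'$, \emph{not} with respect to $\omega$. A single monochromatic component of $(G,\omega)$ may therefore straddle several members of $\mathcal{A}$, and any move played at a vertex of some $A \in \mathcal{A}$ recolours that entire $(G,\omega)$-component, spilling into the neighbouring members of $\mathcal{A}$; once merged, those vertices can never again be given distinct colours. For a minimal example, let $G$ be a single edge $uv$ with $\omega(u)=\omega(v)=\mathrm{red}$ and $\omega'(u)=\mathrm{red}$, $\omega'(v)=\mathrm{blue}$: no sequence of moves transforms $\omega$ into $\omega'$, so your claim that ``after these moves have been played throughout $G$, every vertex of $G$ has exactly the colour assigned to it by $\omega'$'' is false, even though the lemma's inequality still holds (and must be proved some other way). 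Your disjointness argument is also circular: whether distinct members of $\mathcal{A}$ remain unlinked \emph{during} the recolouring phase is exactly what spillage threatens, and intermediate colours can coincide across a boundary even though the final $\omega'$-colours differ.

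Relatedly, Corollary \ref{non-interference} cannot do the job you assign it: it bounds the cost of flooding $A \cup B$ to a \emph{common} target colour $d$, and gives no way to combine a flood of $A$ to $c_A$ with a flood of $B$ to $c_B$ when $c_A \neq c_B$ — which is precisely what realising $\omega'$ would require. The paper's proof runs the recursion in the opposite direction: it inducts on $m(G,\omega',d)$, takes an optimal sequence $S$ for $(G,\omega')$, and analyses its final move $\alpha$. If $\alpha$ merges components $X_1,\ldots,X_r$ of the current colouring, then (since moves never split components) each $X_i$ is a union of members of $\mathcal{A}$, so the inductive hypothesis applies to each $X_i$ with the subsequence $S_i$ of moves of $S$ played in $X_i$, and Corollary \ref{non-interference} is invoked only at the end, to combine floods of the $X_i$ all to the \emph{same} colour $d$. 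In other words, one simulates the $\omega'$-optimal sequence at the top level and pushes the $\omega$-to-$\omega'$ correction down into ever smaller pieces, rather than performing the correction up front; that inversion is the idea your proposal is missing, and without it the argument does not go through.
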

\begin{proof}
We proceed by induction on $m(G,\omega',d)$.  Note that if $m(G,\omega',d) = 0$ then the result is trivially true: in this case $\mathcal{A}$ contains a single monochromatic component $G$, with colour $d$, so we have
$$m(G,\omega',d) + \sum_{A \in \mathcal{A}} m(A,\omega,c_A) = m(G,\omega,d).$$

Suppose now that $m(G,\omega',d) > 0$, and let $S$ be an optimal sequence of moves to flood $G$ with colour $d$, when the initial colouring is $\omega'$.  We proceed by case analysis on the final move, $\alpha$, of $S$.  First suppose that $G$ is already monochromatic before $\alpha$, so this final move just changes the colour of the entire graph to $d$ from some colour $d' \in C$.  In this case, $m(G,\omega',d) = m(G,\omega',d') + 1$, and so we may apply the inductive hypothesis to see that 
\begin{align*}
m(G,\omega,d) & \leq 1 + m(G,\omega,d') \\
              & \leq 1 + m(G,\omega',d') + \sum_{A \in \mathcal{A}} m(A,\omega,c_A) \\
              & = m(G,\omega',d) + \sum_{A \in \mathcal{A}} m(A,\omega,c_A),
\end{align*}
as required.

Now suppose that $G$ is not monochromatic before $\alpha$, and so this move links monochromatic components $X_1, \ldots, X_r$.  We may assume that $\alpha$ changes the colour of $X_1$ from $d'$ to $d$, and that all the components $X_2, \ldots, X_r$ have colour $d$ before $\alpha$.  Let $S_i$ denote the subsequence of $S$ consisting of moves played in $X_i$, and observe that playing $S_i$ in the isolated subgraph $X_i$ must flood this graph with colour $d$, so $m(X_i,\omega',d) \leq |S_i|$.  Note that, as no move can split a monochromatic component, the sets $\mathcal{A}_i = \{A \in \mathcal{A}: A \subseteq X_i\}$ (for $1 \leq i \leq r$) partition $\mathcal{A}$.

Observe that, for $2 \leq i \leq r$, $m(X_i, \omega, d) < |S| = m(G,\omega',d)$, and so we may apply the inductive hypothesis to see that
\begin{align*}
m(X_i,\omega,d) & \leq m(X_i,\omega',d) + \sum_{A \in \mathcal{A}_i} m(A,\omega,c_A) \\
                              & \leq |S_i| + \sum_{A \in \mathcal{A}_i} m(A,\omega,c_A).
\end{align*}                              
Similarly, the inductive hypothesis gives
$$m(X_1,\omega,d') \leq m(X_1,\omega',d') + \sum_{A \in \mathcal{A}_1} m(A,\omega,c_A),$$
and so, as $m(X_1,\omega',d') \leq |S_1| - 1$, we have
\begin{align*}
m(X_1,\omega,d) & \leq 1 + m(X_1,\omega,d) \\
                              & \leq |S_1| + \sum_{A \in \mathcal{A}_1} m(A,\omega,c_A).
\end{align*}
Now we can apply Corollary \ref{non-interference} to see that
$$m(G,\omega,d) \leq \sum_{i=1}^r m(X_i,\omega,d),$$
and so
\begin{align*}
m(G,\omega,d) & \leq \sum_{i=1}^r (|S_i| + \sum_{A \in \mathcal{A}_i} m(A,\omega,c_A)) \\
                 & = |S| + \sum_{A \in \mathcal{A}} m(A,\omega,c_A) \\
                 & = m(G,\omega',d) + \sum_{A \in \mathcal{A}} m(A,\omega,c_A),
\end{align*}
completing the proof.
\end{proof}

Using the previous results, we are now ready to prove the key result of this section. 

\begin{lma}
Let $G$ with colouring $\omega$ (from colour-set $C$) be the graph corresponding to a $2 \times n$ flood-it board $B$, let $H$ be a connected induced subgraph of $G$, and let $u$ and $w$ be vertices lying in the leftmost and rightmost columns of $H$ respectively.  Then, for any $d \in C$, there exists a $d$-minimal spanning tree $T$ for $H$ such that $\bare(T) \subseteq P(T,u,w)$.
\label{leafy-path}
\end{lma}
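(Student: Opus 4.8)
The plan is to fix the colour $d$ throughout and to produce a single spanning tree $T^*$ of $H$ that is a \emph{caterpillar} with spine running from $u$ to $w$: one in which every vertex off the path $P(T^*,u,w)$ is a leaf, so that $\bare(T^*) \subseteq P(T^*,u,w)$. By Theorem~\ref{spanning-tree} we have $m(H,\omega,d) = \min_{T \in \mathcal{T}(H)} m(T,\omega,d)$, so every spanning tree satisfies $m(T^*,\omega,d) \ge m(H,\omega,d)$; hence to show $T^*$ is $d$-minimal it suffices to exhibit it with $m(T^*,\omega,d) \le m(H,\omega,d)$. The whole problem thus reduces to reshaping a $d$-minimal tree into caterpillar form without increasing the number of moves.

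First I would record the structure forced by the $2\times n$ geometry. Connectivity of $H$ forbids empty columns, and the non-crossing property of borders (noted in Section~\ref{notation}) forces any two consecutive columns of $H$ to share a row; consequently, in any doubled column the cell not used by a left-to-right spine is vertically adjacent to the cell that is. This already shows that caterpillars of the required shape exist, with each surplus cell hanging off the spine as a vertical leaf; the entire difficulty is to obtain one that is simultaneously $d$-minimal.

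To achieve this I would run an exchange argument. Among all $d$-minimal spanning trees, choose $T$ minimising the lexicographic potential $\Phi(T) = \bigl(\max_{v} \mathrm{dist}_T(v,P(T,u,w)),\, N_T\bigr)$, where $N_T$ is the number of vertices attaining that maximum distance. If the first coordinate is at most $1$ then every off-path vertex is a leaf, giving $\bare(T) \subseteq P(T,u,w)$ and we are done. Otherwise pick a vertex $v'$ with $\mathrm{dist}_T(v',P) \ge 2$ maximal, let $v$ be its parent (so $v$ is an internal vertex lying off the path) and let $Y$ be the subtree of $T$ hanging below $v'$. Using planarity of the board and the no-crossing property, I would locate a board-edge joining $Y$ to a vertex strictly closer to $P(T,u,w)$; deleting the tree-edge $vv'$ and adding this edge yields a spanning tree $T'$ with $\Phi(T') < \Phi(T)$. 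Since $T$ was $\Phi$-minimal among $d$-minimal trees, it remains only to check that the exchange does not increase the flooding cost, for then $T'$ is also $d$-minimal and we reach a contradiction.

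The crux, and the step I expect to be the main obstacle, is exactly this cost comparison. Here I would apply Proposition~\ref{strong-non-int} with $Y$ the detached subtree and $X$ the path-side piece adjacent to it: an optimal flooding of $Y$ with colour $d$ changes the colour of its (unique, by the two-row structure) attachment vertex, and the same structure guarantees that an optimal flooding of the complementary part reaches into $Y$, so the hypotheses hold and the cost of flooding $T'$ decomposes additively as the cost on $T' \setminus (V(X)\cup V(Y))$ plus the separate costs on $X$ and on $Y$. Matching this against the analogous decomposition of $T$, and absorbing the colours left on the relocated cells by re-flooding them through Lemma~\ref{change-colouring}, should give $m(T',\omega,d) \le m(T,\omega,d)$. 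The delicate points are arranging the hypotheses of Proposition~\ref{strong-non-int} by choosing $Y$ as a deepest subtree (so that it meets the path only through a single predictable vertex), and verifying that the colour bookkeeping from Lemma~\ref{change-colouring} does not overspend the move budget; both rely essentially on the fact that on a $2\times n$ board hanging subtrees have such constrained shape.
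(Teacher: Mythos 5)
Your geometric groundwork is fine --- on a $2 \times n$ board every off-path vertex is vertically adjacent to a path vertex, so caterpillar spanning trees with spine from $u$ to $w$ always exist, and your reattachment edge is always available. The genuine gap is at exactly the step you flag as the crux: the claim that the edge exchange does not increase the flooding cost. Proposition \ref{strong-non-int} cannot be invoked the way you describe, because its hypotheses are \emph{interaction} conditions that simply need not hold: there is no reason that an optimal flooding of the detached piece $Y$ changes the colour of its attachment vertex (e.g.\ if $Y$'s colours merge into $d$ without that vertex ever moving first), and no reason that an optimal flooding of the complementary part, played in the whole tree, ``reaches into $Y$'' --- you assert the two-row structure guarantees this, but it does not; optimal sequences can flood $X$ while never recolouring the vertex adjacent to $Y$. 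Worse, even where the proposition applies it only yields an \emph{upper} bound on the cost of one tree in terms of subtree costs. Your plan to ``match this against the analogous decomposition of $T$'' presupposes a matching \emph{lower} bound, i.e.\ that $m(T,\omega,d)$ decomposes additively into the optima of the pieces. Only the upper bound holds (Corollary \ref{non-interference}); an optimal sequence on $T$ can share moves between the hanging subtree and the rest, so the sum of piecewise optima can strictly exceed $m(T,\omega,d)$. Without the lower bound you cannot conclude $m(T',\omega,d) \leq m(T,\omega,d)$, the $\Phi$-minimality contradiction collapses, and the whole exchange scheme is unsupported.

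This is not a repairable detail but the heart of the matter, and it is where the paper's proof spends nearly all of its effort. The paper does not exchange edges in a fixed minimal tree; it inducts on $m(H,\omega,d)$, takes an optimal sequence $S$, and uses its \emph{last move} to decompose $H$ into the recoloured component $A$, at most one left component $L$, at most one right component $R$, and column-trapped components $X_1,\ldots,X_r$. It then builds the caterpillar from inductively obtained trees for these pieces and pays for every possible interaction explicitly: the ``autonomous colours'' $C_A$, the sets $U_L, W_L, U_R, W_R$ recording which deep vertices must be rescued through the junction vertices $x$ and $y$, a case analysis on which of $U_L, U_R$ are empty (where Proposition \ref{strong-non-int} is applied only after its hypotheses have been \emph{verified}, via a contradiction with the optimality of $S$), and a separate treatment of the degenerate case $L = R$ using Lemma \ref{change-colouring}. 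That bookkeeping is precisely the content your proposal defers to ``delicate points''; as written, the proposal assumes the conclusion of the comparison it needs to prove.
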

\begin{proof}
We proceed by induction on $m_H(H,\omega,d)$.  Note that the result is trivially true if $m_H(H,\omega,d) = 0$ as the graph is initially monochromatic with colour $d$ and so any spanning tree will do.  Suppose then that $m_H(H,\omega,d) > 0$.  Let $S$ be an optimal sequence to flood $H$ with colour $d$, and suppose that the last move of $S$ is $\alpha$.

If $H$ is monochromatic in some colour $d' \in C$ before $\alpha$ is played, and so this final move just changes the colour of the whole graph to $d$, we see that $m_H(H,\omega,d') \leq m_H(H,\omega,d) - 1$.  Thus we may apply the inductive hypothesis to obtain a $d'$-minimal spanning tree $T$ for $H$ such that $\bare(T) \subseteq P(T,u,w)$.  But then
$$m_T(T,\omega,d) \leq 1 + m_T(T,\omega,d') = 1 + m_H(H,\omega,d') \leq m_H(H,\omega,d),$$
and so $T$ is also a $d$-minimal spanning tree for $H$.

Thus we may assume that $H$ is not monochromatic immediately before $\alpha$ is played.  This means that $\alpha$ must change the colour of a monochromatic component $A$ from some $d' \in C$ to $d$, where $H \setminus A$ is nonempty and has colour $d$ before $\alpha$ is played.  Since $H$ is a connected induced subgraph of a $2 \times n$ board, $H \setminus A$ has at most one component $L$ which contains vertices lying in columns to the left of all columns containing a vertex of $A$, and correspondingly at most one component $R$ containing vertices lying in columns entirely to the right of $A$.  There may additionally be some components $X_1, \ldots, X_r$ of $H \setminus A$ which contain only vertices which lie in the same column as some vertex of $A$.  A possible structure for $H$ is illustrated in Figure \ref{H-cpts}.  We will exploit the structure of $H \setminus A$ to define a $d$-minimal spanning tree $T$ for $H$ whose non-leaf vertices lie on $P(T,u,w)$.

The remainder of the proof is structured as follows.  We begin by defining a spanning tree $T$ for $H$ and identifying certain important substructures in $T$, and then go on to show that $T$ is in fact $d$-minimal.  To prove $d$-minimality, there are two cases, depending on whether $L \neq R$ (as in Figure \ref{H-cpts}) or $L=R$ (as in Figure \ref{L=R}); in the first of these cases, we will need to consider three subcases.

\begin{figure}
\centering
\includegraphics[width = 0.9 \linewidth]{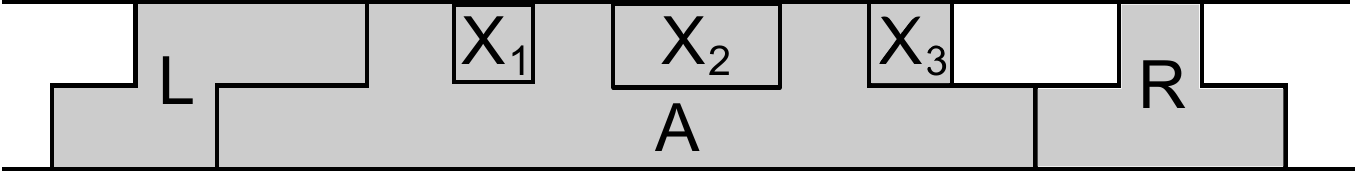}
\caption{Monochromatic components of $H$ before the final move is played.}
\label{H-cpts}
\end{figure}

\begin{figure}
\centering
\includegraphics[width = 0.7 \linewidth]{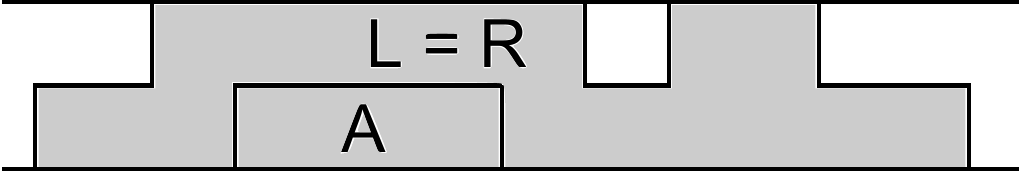}
\caption{It is possible that $L = R$.}
\label{L=R}
\end{figure}

\subsubsection*{The Construction of $T$}

Set $v$ (respectively $v'$) to be any vertex lying in the leftmost (respectively rightmost) column of $A$ that has at least one neighbour in $L$ (respectively $R$); if $L$ (respectively $R$) is empty, we set $v=u$ (respectively $v' = w$).  If two vertices of $L$ lie in the rightmost column of $L$, one of these must be adjacent to $v$, in which case we set this vertex to be $u'$; otherwise $u'$ is defined to be the unique vertex of $L$ that lies in the rightmost column.  We define $w'$ symmetrically, so that $w'$ lies in the leftmost column of $R$, and is adjacent to $v'$.  Note that, since we can flood $L$ with $d$ by playing the sequence $S$ but omitting the final move, we have $m_L(L,\omega,d) < m_H(H,\omega,d)$ and so, by the inductive hypothesis, there exists a $d$-minimal spanning tree $T_L$ for $L$ such that $\bare(T_L) \subseteq P(T_L,u,u')$.  Similarly, there exists a $d$-minimal spanning tree $T_R$ for $R$ such that $\bare(T_R) \subseteq P(T_R,w',w)$, and a $d'$-minimal spanning tree $T_A$ for $A$ such that $\bare(T_A) \subseteq P(T_A,v,v')$.  Let $S_A$ be an optimal sequence of moves to flood $T_A$ with colour $d'$, and $S_L$ and $S_R$ be optimal sequences to flood $T_L$ and $T_R$ respectively with colour $d$.  

Observe that, as well as containing vertices that lie in columns to the left (respectively right) of $A$, $L$ (respectively $R$) may additionally contain some vertices that lie in the same column as a vertex of $A$.  We set $T_L'$ to be the subtree of $T_L$ induced only by those vertices in $L$ that lie in the same column as or to the left of the leftmost vertex of $A$, and define $T_R'$ symmetrically.  Note that, even if $L=R$, we have $T_L' \cap T_R' = \emptyset$.  

Now set $T_A'$ to be the spanning tree for $H \setminus (T_L' \cup T_R')$ obtained from $T_A$ by adding an edge from every vertex $z$ of this subgraph that does \emph{not} lie in $A$ to the vertex of $A$ that lies in the same column as $z$ (and observe that $\bare(T_A') \subseteq P(T_A',v,v')$).  Assuming $T_L$ and hence $T_L'$ is nonempty, we define $x$ to be the rightmost neighbour of $v$ in $T_L'$; symmetrically, if $T_R \neq \emptyset$, we set $y$ to be the leftmost vertex of $T_R'$ that is adjacent to $v'$ (note that if $T_L' = T_L$ we will have $x = u'$, and if $T_R' = T_R$ then $y = w'$).  We then obtain a spanning tree $T$ for $H$ by connecting $T_L'$, $T_A'$ and $T_R'$ with the edges $xv$ and $v'y$.  The construction of $T$ is illustrated in Figure \ref{T-construction}.  It is clear from the construction that $T$ is a spanning tree for $H$, and that $\bare(T) \subseteq P(T,u,w)$; we will argue that in fact $T$ is a $d$-minimal spanning tree for $H$.

\begin{figure}
\centering
\includegraphics[width = 0.9 \linewidth]{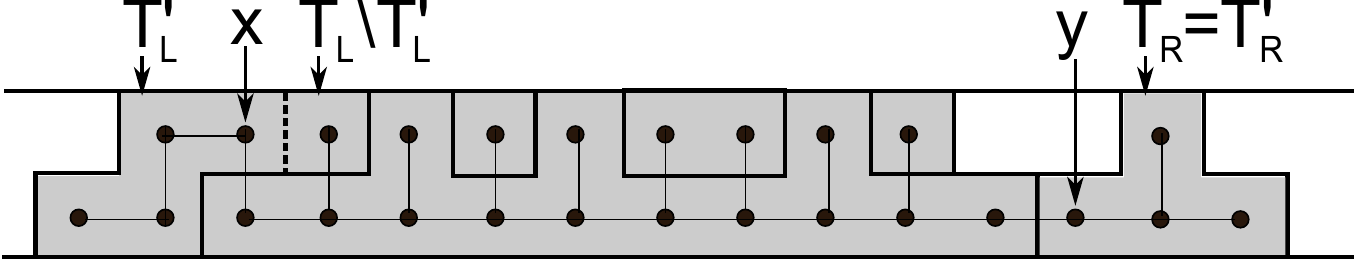}
\caption{The spanning tree $T$.}
\label{T-construction}
\end{figure}

\begin{case}
$L \neq R$.
\end{case}

Having defined the spanning tree $T$ for $H$, we now consider how to flood $T$ with colour $d$ in the case that $L \neq R$.  First, observe that
\begin{align}
|S| & \geq 1 + m_A(A,\omega,d') + m_L(L,\omega,d) + m_R(R,\omega,d) \nonumber \\
    & \qquad \qquad \qquad \qquad + \sum_{i=1}^r m_{X_i}(X_i,\omega,d) \nonumber \\
    & \geq 1 + |S_A| + |S_L| + |S_R| + | \bigcup_{i=1}^r \col(X_i,\omega) \setminus \{d\}|.
    \label{|S|}
\end{align}

We now define $S_L'$ (respectively $S_R'$) to be the subsequence of $S_L$ (respectively $S_R$) consisting of moves that change the colour of at least one vertex in $T_L'$ (respectively $T_R'$).  Observe that, as $L \neq R$ and hence $S_L' \cap S_R' = \emptyset$, we may assume without loss of generality that all moves of $S_L'$ (respectively $S_R'$) are in fact played in $T_L'$ (respectively $T_R'$): any move of $S_L'$ (respectively $S_R'$) that is not played here can be replaced with a move that is played in $T_L'$ (respectively $T_R'$) and has exactly the same effect on $T_L$ when played as part of $S_L$.  Since $T_L'$ (respectively $T_R'$) is a subtree of $T_L$ (respectively $T_R$), any sequence of moves played in $T_L'$ (respectively $T_R'$) will have the same effect on the vertices of $T_L'$ (respectively $T_R'$) as when played in the larger tree $T_L$ (respectively $T_R$); thus, as $S_L'$ (respectively $S_R'$) contains all moves of $S_L$ (respectively $S_R$) that change the colour of vertices of $T_L'$ (respectively $T_R'$), it must be that $S_L'$ (respectively $S_R'$), played in $T_L'$ (respectively $T_R'$) floods this tree with colour $d$.  This immediately implies that $m_{T_L'}(T_L',\omega,d) \leq |S_L'|$ and $m_{T_R'}(T_R',\omega,d) \leq |S_R'|$.

Let $C_L$ (respectively $C_R$) be the set of colours $\bar{d} \neq d$ such that a move of $S_L$ (respectively $S_R$) is played in a monochromatic component of colour $\bar{d}$ that does not intersect $T_L'$ (respectively $T_R'$).  We also set $C_X = \bigcup_{i=1}^r \col(X_i,\omega) \setminus \{d\}$; we will call elements of $C_A = C_X \cup C_L \cup C_R$ \emph{autonomous} colours.  Note that, for each $z \in V(T_L) \setminus V(T_L')$ that does not have colour $d$ initially, at least one of the following must hold in order for $z$ to be given colour $d$:
\begin{enumerate}
\item $\col(\{z\},\omega) \in C_L$, or
\item either initially, or after some move of $S_L$, $x$ has colour $\col(\{z\},\omega)$.
\end{enumerate}
Let $W_L$ be the set of vertices $z \in V(T_L) \setminus V(T_L')$ such that the first statement holds, so $z \in W_L$ if and only if $\col(\{z\},\omega) \in C_L$.  We then set $U_L$ to be the set of vertices in $ = V(T_L) \setminus V(T_L')$ that do not belong to $W_L$ and do not have colour $d$ initially, and note that the second statement must hold for every $v \in U_L$.  We can apply exactly the same reasoning to $V(T_R) \setminus V(T_R')$ (replacing $x$ with $y$), and define $U_R$ and $W_R$ analogously.  

Observe that $|S_L| \geq |S_L'| + |C_L|$, and $|S_R| \geq |S_R'| + |C_R|$.  Thus, by (\ref{|S|}), we see that
\begin{align}
|S| & \geq 1 + |S_A| + |S_L'| + |S_R'| + |C_L| + |C_R| + |C_X|  \label{bound-S-fine}\\
    & \geq 1 + |S_A| + |S_L'| + |S_R'| + |C_A|. \label{bound-S-coarse}
\end{align}

We will consider three sub-cases, depending on whether none, one or both of $U_L$ and $U_R$ are nonempty.  To summarise, the key properties of our construction in the case $L \neq R$ that we need when considering these subcases are as follows.
\begin{itemize}
\label{key-properties}
\item There exists a sequence $S_A$ which floods $A$ with some colour.
\item $V(T_L') \setminus V(T_L)$ (respectively $V(T_R') \setminus V(T_R)$) is partitioned into two sets $U_L$ and $W_L$ (respectively $U_R$ and $W_R$).
\item There is a sequence $S_L'$ (respectively $S_R'$) that floods $T_L'$ (respectively $T_R'$) with colour $d$ and at some point gives $x$ (respectively $y$) every colour in $\col(U_L,\omega)$ (respectively $\col(U_R,\omega)$).
\item The sets $C_X = \bigcup_{i=1}^r \col(X_i,\omega) \setminus \{d\}$, $C_L = \col(W_L,\omega)$ and $C_R = \col(W_R,\omega)$ are such that
$$|S| \geq 1 + |S_A| + |S_L'| + |S_R'| + |C_L| + |C_R| + |C_X|.$$
\end{itemize}
These properties will enable us to prove in all three subcases that $m_T(T,\omega,d) \leq |S| = m_H(H,\omega,d)$, and hence that $T$ is $d$-minimal.

\begin{subcase}
$U_L = U_R = \emptyset$.
\end{subcase}

In this case, we will first play $S_A$, flooding $A$, and then repeatedly change the colour of $A$ to cycle through all colours in $C_A$.  

First suppose that $U_L = U_R = \emptyset$.  Note in this case that our first $|S_A| + |C_A|$ moves make $T_A'$ monochromatic in some colour, so $m_{T_A'}(T_A',\omega,d) \leq 1 + |S_A| + |C_A|$.  Thus we can apply Corollary \ref{non-interference} to see that
\begin{align*}
m_T(T,\omega,d) & \leq m_{T_L'}(T_L',\omega,d) + m_{T_A'}(T_A',\omega,d) + m_{T_R'}(T_R',\omega,d) \\
                & \leq 1 + |S_A| + |S_L'| + |S_R'| + |C_A| \\
                & \leq |S| & \text{by \eqref{bound-S-coarse}}\\
                & = m_H(H,\omega,d),
\end{align*}
as required.

\begin{subcase}
Exactly one of $U_L$ and $U_R$ is nonempty.
\end{subcase}

Without loss of generality suppose that $U_L \neq \emptyset$ (and hence that $U_R = \emptyset$).  Once again, we begin by playing $S_A$ and then changing the colour of $A$ to cycle through all colours in $C_A$; these first $|S_A| + |C_A|$ moves create a monochromatic component $A'$ containing $T_A'\setminus(U_L \cup U_R)$.

We now argue that we may assume that playing these moves in $T$ does not change the colour of any vertex in $T_L'$.  First, we claim that playing $S_A$ in $T$ cannot change the colour of any vertex in $T_L'$.  Indeed, if this sequence does change the colour of a vertex in $T_L'$, it must change the colour of $x$, and this colour change will be due to moves in $S_A$ changing the colour of $v$.  Thus, if we played $S_A$ in the tree $T_1$, obtained by connecting $T_L$, $T_A$ and $T_R$ with the edges $xv$ and $yv'$, $v$ would go through the same sequence of colour changes and so the sequence would still change the colour of $x$ (which is the unique vertex of $T_L$ adjacent to $T_A$).  However, as $U_L \neq \emptyset$, we know that playing $S_L$ in $T_L$ must change the colour of $x$ (if there is a vertex in $U_R$ that does not initially have the same colour as $x$, $S_L$ must give $x$ this colour; otherwise, $x$ does not initially have colour $d$), and so by Proposition \ref{strong-non-int} (setting $X=T_A$, $Y=T_L$, $S_X = S_A$ and $S_Y = S_L$) we would have $m_{T_1}(T_1,\omega,d) \leq |S_R| + |S_L| + |S_A| < |S|$, implying (by Theorem \ref{spanning-tree}) that $m_H(H,\omega,d) \leq m_{T_1}(T_1,\omega,d) < |S| = m_H(H,\omega,d)$, a contradiction.  Hence playing $S_A$ in $T$ will not change the colour of any vertex in $T_L'$.

We may further assume that then cycling $A$ through all colours in $C_A$ does not change the colour of any vertex in $T_L'$: if $\col(\{x\},\omega) \in C_A$ we can choose this to be the last colour we play in $A$, and so our sequence will link $A$ to $x$ but will not change the colour of $x$ (or therefore of any other vertex in $T_L'$).  Thus we may indeed assume that the first $|S_A| + |C_A|$ moves do not change the colour of any vertex in $T_L'$.

Next, if playing $S_A$ and cycling through the colours of $C_A$ has not already linked $A'$ to $x$, we play one further move to give $A'$ the same colour as $x$.  Since the sequence of moves we play up to this point does not change the colour of any vertex in $T_L'$, we can now play the sequence $S_L'$ to give every vertex in $T_L'$ colour $d$.  As $x$ is in the same monochromatic component as $A'$ this will also give all vertices in $A'$ colour $d$.  Moreover, playing this sequence will at some point give $x$, and hence $A'$, every colour in $\col(U_L,\omega)$, and so will link every vertex in $U_L$ to $A'$ and ultimately give these vertices colour $d$.  Thus, playing $S_A$, cycling through $C_A$, if necessary linking $x$ to $A'$, and then playing $S_L'$ will flood all the vertices of $T \setminus T_R'$ with colour $d$ (as $U_R = \emptyset$), so we see that
$$m_{T \setminus V(T_R')}(T \setminus V(T_R'), \omega,d) \leq |S_A| + |C_A| + 1 + |S_L'|.$$
But then, once again, we can apply Corollary \ref{non-interference} to see that 
\begin{align*}
m_T(T,\omega,d) & \leq m_{T \setminus V(T_R')}(T \setminus V(T_R'), \omega,d) + m_{T_R'}(T_R',\omega,d) \\
                   & \leq 1 + |S_A| + |C_A| + |S_L'| + |S_R'| \\
                   & \leq |S| & \text{by \eqref{bound-S-coarse}}\\
                   & = m_H(H,\omega,d),
\end{align*}
as required.

\begin{subcase}
$U_L, U_R \neq \emptyset$.
\end{subcase}

In this final subcase, we begin by playing $S_A$, cycling $A$ through all colours in $C_X \cup C_L$, and then (if required) playing an additional move to change the colour of the monochromatic component containing $A$ to be the same as $x$; as before we may assume (choosing an appropriate order in which to cycle through the colours in $C_X \cup C_L$) that these initial moves do not change the colour of any vertex in $T_L'$.

Note that, as $U_L \neq \emptyset$, the colour of $x$ must change at least once when we play $S_L'$ in $T_L'$.  Set $\beta$ to be the last move in $S_L'$ to change the colour of $x$, and note then that $\beta$ must change the colour of some component $Z$, containing $x$, to $d$.  Set $\bar{T_L} = T_L' \setminus V(Z)$, and let $S_Z$ be the subsequence of $S_L'$ consisting of moves played in $Z$ (so $S_Z$ floods $Z$ with colour $d$, and $\beta$ is the final move of $S_Z$).  As $Z$ is monochromatic before $\beta$, playing $S_Z \setminus \beta$ in $Z$ must flood this component with some colour $d_Z \in C$.  Observe also that the sequence $S_L' \setminus S_Z$ must, when played in the forest $\bar{T_L}$, give every vertex of $\bar{T_L}$ colour $d$.

Suppose that, after playing $S_A$ and linking $A$ to $x$, we then play $S_Z \setminus \beta$.  This will ensure that $x$ and hence $A$ at some point receives every colour in $\col(U_L,\omega)$ (as $d \notin \col(U_L,\omega)$), so every vertex in $U_L$ is linked to $A$.  Note that we now have a monochromatic component $B$ that contains $A$, $Z$ and all vertices of $T_A' \setminus V(T_R)$ that do not initially have colour $d$.

We claim that the sequence of moves we play up to this point cannot change the colour of any vertex in $T_R'$.  To prove the validity of this claim, set $T_2$ to be the spanning tree for $H$ obtained by connecting $T_R$ and $T \setminus V(T_R)$ with the edge $yv'$.  It is clear that, if the sequence of moves we have played so far changes the colour of any vertex in $T_R'$ when played in $T$, then playing the same sequence in $T_2$ would change the colour of $y \in T_R$ ($v'$ will go through the same sequence of colour changes whether the moves are played in $T$ or $T_2$, and vertices in $T_R'$ will only change colour when the sequence is played in $T$ if the colour changes of $v'$ cause the colour of $y$ to change). However, as $U_R \neq \emptyset$, we also know that $S_R$, played in $T_R$, changes the colour of $y$.  Note also that all vertices of $T_2 \setminus (V(B) \cup V(T_R))$ that do not belong to $\bar{T_L}$ have colour $d$ initially, so $m_{\bar{T_L}}(\bar{T_L},\omega,d)$ moves suffice to flood $T_2 \setminus (V(B) \cup V(T_R))$ with colour $d$.  We can now apply Proposition \ref{strong-non-int}, setting $S_X$ to be the sequence of moves we have played up to this point, $X = B \setminus V(T_R)$, $Y = T_R$ and $S_Y = S_R$ to see that
\begin{align*}
m_{T_2}(T_2,\omega,d) & \leq m_{\bar{T_L}}(\bar{T_L},\omega,d) + |S_A| + 1 + |S_Z| - 1 + |C_X| + |C_L| + |S_R| \\
                & \leq |S_L'| - |S_Z| + |S_A| + |S_Z| + |C_X| + |C_L| + |S_R'| \\
                & = |S_L'| + |C_L| + |S_R| + |S_A| + |C_X| \\
                & < |S|.
\end{align*}
Theorem \ref{spanning-tree} would then imply that
$$m_H(H,\omega,d) \leq m_{T_2}(T_2,\omega,d) < |S| = m_H(H,\omega,d),$$
a contradiction.                              

Next we cycle the monochromatic component $B$ through all colours in $C_R$ (again, we may order these colours to ensure this does not change the colour of $y$); if this does not give $B$ the same colour as $y$, we then play one further move to link this component to $y$.  As we may therefore assume that all vertices in $T_R'$ still have their initial colouring, if we now play $S_R'$, this will flood $T_R'$ with colour $d$; as $B$ and $y$ lie in the same monochromatic component before these moves are played, this sequence will also give every vertex in $B$ colour $d$.  Moreover, linking $B$ to $y$ and playing $S_R'$ will at some point give $y$, and hence $A$, every colour in $\col(U_R,\omega)$, and so all vertices in $U_R$ will be linked to $B$ and thus end up with colour $d$.  So this sequence of moves gives every vertex in $T \setminus V(\bar{T_L})$ colour $d$, and we have
\begin{align*}
m_{T \setminus V(\bar{T_L})}(T \setminus V(\bar{T_L}), \omega,d) & \leq |S_A| + |C_X| + |C_L| + 1 + |S_Z| - 1 + |C_R| + 1 + |S_R'| \\
                                  & \leq |S_A| + |C_A| + |S_Z| + |S_R'| + 1.
\end{align*}
Finally, we apply Corollary \ref{non-interference} to give
\begin{align*}
m_T(T,\omega,d) & \leq m_{T \setminus V(\bar{T_L})}(T \setminus V(\bar{T_L}), \omega,d) + m_{\bar{T_L}}(\bar{T_L}, \omega, d) \\
              & \leq |S_A| + |C_A| + |S_Z| + |S_R'| + 1 + |S_L'| - |S_Z| \\
              & = |S_A| + |C_A| + |S_R'| + |S_L'| + 1 \\
              & \leq |S| & \text{by \eqref{bound-S-coarse}} \\
              & = m_H(H,\omega,d),
\end{align*} 
as required.  This completes the proof in the final subcase for $L \neq R$.

\begin{case}
$L = R$.
\end{case}

For the case $L=R$, the structure of $T$ is illustrated in Figure \ref{T,L=R}.  The previous reasoning fails in the case $L=R$ because $S_L = S_R$ and so we may not be able to define disjoint subsequences $S_L'$ and $S_R'$ which flood $T_L'$ and $T_R'$ respectively.  However, by considering more carefully the sequence of moves that floods $H \setminus V(A)$, we are able to deal with this problem.

\begin{figure}
\centering
\includegraphics[width=0.9 \linewidth]{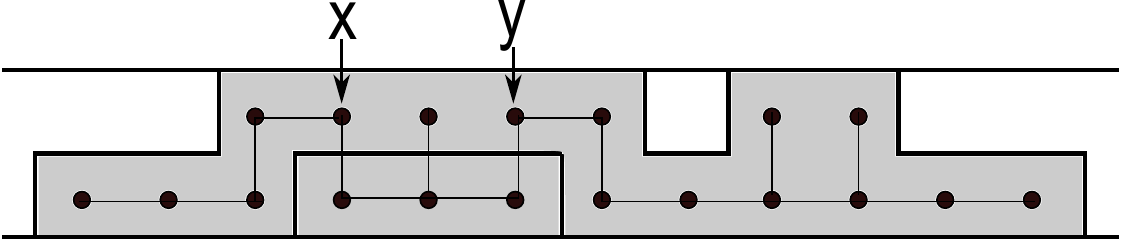}
\caption{The construction of $T$ in the case that $L=R$.}
\label{T,L=R}
\end{figure}

If $x$ and $y$ belong to the same monochromatic component $T'$ of $T_L (=T_R)$ under the initial colouring $\omega$ (where this component has colour $d_{xy}$), then we can flood $\tilde{T} = T[V(T') \cup V(A)]$ by playing $S_A$ and then changing the colour of $A$ to $d_{xy}$: this implies that $m_{\tilde{T}}(\tilde{T}, \omega, d_{xy}) \leq |S_A| + 1$.  Let $\omega'$ be the colouring of $T$ which agrees with $\omega$ on every vertex in $T_L$, and gives every vertex in $A$ colour $d_{xy}$.  Then $T$ with colouring $\omega'$ is equivalent (when monochromatic components are contracted) to $T_L$ with colouring $\omega$, implying that $m_T(T,\omega',d) = m_{T_L}(T_L,\omega,d) \leq |S_L|$.  We can then apply Lemma \ref{change-colouring} to give
\begin{align*}
m_T(T,\omega,d) & \leq m_T(T,\omega',d) + m_{T[V(T') \cup V(A)]}(T[V(T') \cup V(A)], \omega, d_{xy}) \\
                & \leq |S_L| + |S_A| + 1 \\
                & = |S| \\
                & = m_H(H,\omega,d).
\end{align*}               

So we may assume that $x$ and $y$ do not belong to the same monochromatic component initially.  Let $S'$ be the initial segment of $S_L$ up to and including the move that first links $x$ and $y$; let $T'$ be the monochromatic component of $T \setminus V(A)$ that contains $x$ and $y$ at this point, suppose that $T'$ has colour $\tilde{d}$ and that $k$ moves of $S'$ are played in $T'$.  

We now claim that it suffices to prove that 
\begin{equation}
m_{\tilde{T}}(\tilde{T},\omega,\tilde{d}) \leq |S_A| + k + 1.
\label{L=R-claim}
\end{equation}
To see that this is indeed sufficient, set $\omega'$ to be the colouring of $V(T)$ that agrees with $S'(\omega,T_L)$ on $T_L$ and gives all vertices of $A$ colour $\bar{d}$.  Note that $T$ with colouring $\omega'$ is equivalent (when monochromatic components are contracted) to $T_L$ with colouring $S'(\omega,T_L)$, and so $m_T(T,\omega',d) = m_{T_L}(T_L,S'(\omega,T_L),d) \leq |S_L| - |S'|$.  Let $\mathcal{A}$ be the set of monochromatic components of $T$ with respect to $\omega'$, and suppose that each $A \in \mathcal{A}$ has colour $c_A$ under this colouring.  As $k$ moves of $S'$ are played in $\tilde{T}$, we can bound the number of moves required to give all the other monochromatic components with respect to $\omega'$ the colour they receive under $\omega'$:
$$\sum_{\substack{A \in \mathcal{A} \\ A \neq \tilde{T}}} m_A(A,\omega,c_A) \leq |S'| - k.$$
Thus
$$\sum_{A \in \mathcal{A}} m_A(A,\omega,c_A) \leq |S'| - k + m_{\tilde{T}}(\tilde{T},\omega,\tilde{d}),$$
and so, if \eqref{L=R-claim} holds,
$$\sum_{A \in \mathcal{A}} m_A(A,\omega,c_A) \leq |S'| - k + |S_A| + k + 1 = |S'| + |S_A| + 1.$$
Lemma \ref{change-colouring} then gives
\begin{align*}
m_T(T,\omega,d) & \leq m_T(T,\omega',d) + \sum_{A \in \mathcal{A}} m_A(A,\omega,c_A) \\
                & \leq |S_L| - |S'| + |S'| + |S_A| + 1 \\
                & = |S| \\
                & = m_H(H,\omega,d),
\end{align*}
as required.  Thus it is indeed sufficient to prove \eqref{L=R-claim}.

To prove the validity of \eqref{L=R-claim}, we will invoke the reasoning used in the case $L \neq R$; to do so we must show that appropriate versions of the key properties listed on page \pageref{key-properties} hold in this case.

We begin with some definitions.  Set $\tilde{T_L}$ to be the maximal monochromatic component of $T'$ containing $x$ immediately before the final move of $S'$, and set $\tilde{T_R} = T' \setminus V(\tilde{T_L}$.  Now let $\tilde{S_L}$ (respectively $\tilde{S_R}$) be the subsequence of $S'$ consisting of moves played in $\tilde{T_L}$ (respectively $\tilde{T_R}$); note that $\tilde{S_L} \cap \tilde{S_R} = \emptyset$, $|\tilde{S_L}| + |\tilde{S_R}| = k$, and that playing $\tilde{S_L}$ (respectively $\tilde{S_R}$) in $\tilde{T_L}$ (respectively $\tilde{T_R}$) floods this tree with colour $\tilde{d}$.  Further define $\tilde{T_L'}$ (respectively $\tilde{T_R'}$) to be the subtree of $\tilde{T_L}$ (respectively $\tilde{T_R}$) induced by vertices lying in the same column as or to the left (respectively right) of $x$ (respectively $y$), and let $\tilde{S_L'}$ (respectively $\tilde{S_R'}$) be the subsequence of $\tilde{S_L}$ (respectively $\tilde{S_R}$) consisting of moves that change the colour of at least one vertex in $\tilde{T_L'}$ (respectively $\tilde{T_R'}$).  Note that, as $\tilde{S_L} \cap \tilde{S_R} = \emptyset$, we can in this case assume that every move of $\tilde{S_L'}$ (respectively $\tilde{S_R'}$) is in fact played in $\tilde{T_L'}$ (respectively $\tilde{T_R'}$), and hence that playing $\tilde{S_L'}$ in $\tilde{T_L'}$ (respectively $\tilde{S_R'}$ in $\tilde{T_R'}$) floods this subtree with colour $\tilde{d}$.

Now let $\tilde{C_L}$ (respectively $\tilde{C_R}$) be the set of colours $\bar{d} \neq \tilde{d}$ such that at least one move of $\tilde{S_L}$ (respectively $\tilde{S_R}$) is played in a monochromatic component of colour $\bar{d}$ that does not intersect $\tilde{T_L'}$ (respectively $\tilde{T_R'}$).  Note that for every vertex $z \in V(\tilde{T_L}) \setminus V(\tilde{T_L'})$ that does not initially have colour $\tilde{d}$, at least one of the following must hold:
\begin{enumerate}
\item $\col(\{z\},\omega) \in C_L$, or
\item either initially, or after some move of $\tilde{S_L}$, $x$ has colour $\col(\{z\},\omega)$.
\end{enumerate}
Let $\tilde{W_L}$ be the set of vertices $z \in V(\tilde{T_L)} \setminus V(\tilde{T_L'})$ such that the first statement holds, and set $U_L = (V(\tilde{T_L}) \setminus V(\tilde{T_L'})) \setminus \tilde{W_L}$.  Note that, for every $z \in \tilde{W_L}$, playing $\tilde{S_L'}$ in $\tilde{T_L'}$ must at some point give $x$ colour $\col(\{z\},\omega)$.  We apply exactly the same reasoning to $V(\tilde{T_R}) \setminus V(\tilde{T_R'})$ (replacing $x$ with $y$) and define $U_R$ and $W_R$ analogously.  Observe that $|\tilde{S_L}| \geq |\tilde{S_L'}| + |\tilde{C_L}|$ and $|\tilde{S_R}| \geq |\tilde{S_R'}| + |\tilde{C_R}|$.

Thus our construction has the following properties.
\begin{itemize}
\item There is a sequence $S_A$ which floods $A$ with some colour.
\item $V(\tilde{T_L}) \setminus V(\tilde{T_L'})$ (respectively $V(\tilde{T_R}) \setminus V(\tilde{T_R'})$) is partitioned into two sets, $\tilde{U_L}$ and $\tilde{W_L}$ (respectively $\tilde{U_R}$ and $\tilde{W_R}$).
\item There is a sequence $\tilde{S_L'}$ (respectively $\tilde{S_R'}$) that floods $\tilde{T_L'}$ (respectively $\tilde{T_R'}$) with colour $\tilde{d}$ and at some point gives $x$ (respectively $y$) every colour in $\col(U_L,\omega)$ (respectively $\col(U_R,\omega)$).
\item The sets $\tilde{C_X} = \bigcup_{i=1}^r \col(X_i,\omega) \setminus \{d\} = \emptyset$, $\tilde{C_L} = \col(\tilde{U_L},\omega)$ and $\tilde{C_R} = \col(\tilde{U_R},\omega)$ are such that
$$k = |\tilde{S_L}| + |\tilde{S_R}| \geq |\tilde{S_L'}| + |\tilde{C_L}| + |\tilde{S_R'}| + |\tilde{C_R}| + |\tilde{C_X}|,$$
and hence
$$|S_A| + k + 1 \geq 1 + |S_A| + |\tilde{S_L'}| + |\tilde{S_R'}| + |\tilde{C_L}| + |\tilde{C_R}| + |\tilde{C_X}|.$$
\end{itemize}
These properties correspond exactly to the list of properties on page \pageref{key-properties} that are required for the proof in the case $L \neq R$; thus we can apply the same reasoning (with the three subcases) again to show that
$$m_{\tilde{T}}(\tilde{T},\omega,\tilde{d}) \leq |S_A| + k + 1,$$
as required to demonstrate the validity of \eqref{L=R-claim}.  This completes the final case of the proof.
\end{proof}

In our analysis of the algorithm in the next section, we will need one additional result: we show in the next lemma that any tree can be flooded by an optimal sequence in which no moves are played at leaves.

\begin{lma}
Let $T$ be any tree, and $\omega$ a colouring of the vertices of $T$.  Then there exists a sequence of moves $S$, of length $m(T,\omega)$, which makes $T$ monochromatic and in which all moves are played in $\bare(T)$.
\label{no-leaf-moves}
\end{lma}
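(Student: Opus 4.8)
The plan is to prove the formally stronger statement that for \emph{every} colour $d \in C$ there is a sequence of length $m(T,\omega,d)$ flooding $T$ with colour $d$ and using only moves played in $\bare(T)$; the lemma then follows by choosing the colour $d$ attaining $m(T,\omega)=\min_d m(T,\omega,d)$. I would argue by induction on $m(T,\omega,d)$, the case $m(T,\omega,d)=0$ being immediate. For the inductive step, take an optimal sequence $S$ flooding $T$ with $d$ and look at its final move $\alpha$. If $T$ is already monochromatic (in some colour $d'$) before $\alpha$, then $m(T,\omega,d')=m(T,\omega,d)-1$, so I apply the inductive hypothesis to $(T,d')$ and append one recolouring move, played at any non-leaf, to obtain the desired sequence. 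Otherwise $\alpha$ recolours a single monochromatic component $A$ (of colour $d'$) to $d$, and just before $\alpha$ the rest of $T$ consists of components $R_1,\dots,R_p$, each already of colour $d$ and each adjacent to $A$, together covering $V(T)$.

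Each $T[A]$ and each $T[R_i]$ is a subtree with flooding number strictly smaller than $m(T,\omega,d)$, so by the inductive hypothesis I obtain a sequence flooding $T[A]$ with $d'$ and sequences flooding each $T[R_i]$ with $d$, each using only trunk moves of the relevant subtree. The crucial observation is that $\bare(T[A]) \subseteq \bare(T)$ and $\bare(T[R_i]) \subseteq \bare(T)$, since a vertex of degree at least two in a subtree has degree at least two in $T$; hence all these are legal trunk moves of $T$. Concatenating them (flood $A$, then each $R_i$) and finishing with one move recolouring $A$ from $d'$ to $d$ reproduces the effect of $S$, so the resulting sequence has length at most $|S|=m(T,\omega,d)$, hence exactly $m(T,\omega,d)$. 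The only place a leaf move can intrude is this final recolouring of $A$: if $|A|\ge 2$ then $A$ contains a non-leaf of $T$ (two adjacent leaves would force $T=K_2$) at which the move may be played, and likewise if the single vertex of $A$ has degree at least two in $T$.

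The main obstacle is therefore the residual case in which $A=\{a\}$ is a single leaf of $T$, i.e.\ the optimal sequence genuinely finishes by recolouring one pendant vertex; here $R_1=T-a$ is connected and entirely of colour $d$ before the last move, and the naive final move $(a,d)$ is played at a leaf. My plan is to absorb $a$ into the flooding of its neighbour $w$ rather than treat it separately: since $a$ keeps its initial colour $d'=\omega(a)$ throughout, I would reorganise the trunk sequence for $T-a$ so that the monochromatic component containing $w$ passes through colour $d'$ at some stage, whereupon $a$ links to it for free and is flooded to $d$ along with $w$; the accounting for this recolouring is exactly what Lemma \ref{change-colouring} is designed to control. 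The delicate point — the step I expect to need the most care — is guaranteeing this can be done without spending an extra move and without the auxiliary recolouring prematurely merging $w$'s component with other equally-coloured neighbours; the degenerate small trees (notably $K_2$, where $\bare(T)$ is empty) must also be checked separately.
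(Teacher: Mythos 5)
There is a genuine gap, and it is located exactly where you suspected: the ``residual case'' in which the optimal sequence ends by recolouring a single pendant vertex is not a technical nuisance to be absorbed, but a point at which your strengthened statement is actually \emph{false}. Consider the path $a - w - b$ with $\omega(a) = \omega(w) = \mathrm{red}$ and $\omega(b) = \mathrm{blue}$, so that $\bare(T) = \{w\}$. Then $m(T,\omega,\mathrm{red}) = 1$, achieved only by the move $(b,\mathrm{red})$ played at the leaf $b$; any sequence of moves played at $w$ needs two moves to flood $T$ with red (e.g.\ $(w,\mathrm{blue})$ then $(w,\mathrm{red})$). So there is no trunk-only sequence of length $m(T,\omega,d)$ flooding with $d = \mathrm{red}$, and your inductive hypothesis fails. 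Your proposed repair --- reorganising the trunk sequence for $T - a$ so that the component containing the neighbour $w$ passes through $\omega(a)$ ``for free'' --- cannot work here: $T - b$ is already monochromatic, its trunk sequence is empty, and making $w$'s component pass through blue costs precisely the extra move you cannot afford. The lemma itself survives only because it takes a minimum over target colours: in the example, $m(T,\omega) = 1$ is attained trunk-only by $(w,\mathrm{blue})$, i.e.\ by flooding with a \emph{different} final colour. This is why the paper proves the statement only for $m(T,\omega)$ and why its construction does not preserve the final colour of the original optimal sequence.

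The paper's argument is also structurally quite different and much shorter: no induction at all. It takes an optimal sequence $S_0$, extracts the subsequence $S_0'$ of moves that change the colour of some trunk vertex (which may be assumed played in $\bare(T)$), and observes that the discarded moves are played at a set $U$ of leaves, each leaf of $U$ accounting for at least one discarded move. Playing $S_0'$ makes $T \setminus U$ monochromatic, and then $|\col(U,\omega)| \leq |U|$ further trunk moves, cycling through the colours remaining on leaves, finish the job --- a counting argument in which each deleted leaf move pays for one terminal colour-cycling move. Separately from the false strengthening, your concatenation step (``flood $A$, then each $R_i$, reproducing the effect of $S$'') also glosses over interference: playing the subtree sequences inside $T$ can change colours outside the intended subtree, which is exactly the issue the paper's Lemma \ref{compatible-ordering} and Corollary \ref{non-interference} are built to control, and those results bound only the number of moves without controlling where they are played. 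You are right, finally, that $K_2$ is degenerate --- but note it is degenerate for the lemma as stated too, since $\bare(K_2) = \emptyset$; the lemma implicitly requires a nonempty trunk, which holds in its application.
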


\begin{proof}
Let $S_0$ be any optimal sequence to flood $T$, and set $S_0'$ to be the subsequence of $S_0$ consisting of moves that change the colour of a vertex in $\bare(T)$.  Note that we may assume without loss of generality that all moves of $S_0'$ are played in $\bare(T)$.  Note further that $S_0 \setminus S_0'$ contains only moves played at leaves, and let $U$ be the set of leaves in which moves of $S_0 \setminus S_0'$ are played.  Observe that playing $S_0'$ in $T$ will make $T \setminus U$ monochromatic, and so we can flood the entire tree by playing a sequence $S$ which consists of $S_0'$ followed by a further $|\col(U,\omega)|$ moves, cycling through the colours still present in leaves of $T$ (playing all moves in $\bare(T)$).  Thus 
$$|S_1| \leq |S_0'| + |\col(U,\omega)| \leq |S_0'| + |U|.$$
However, it is clear that $|S_0| \geq |S_0'| + |U|$, as $S_0 \setminus S_0'$ contains at least one move played at each vertex in $U$.  Hence we see that $|S| \leq |S_0|$, and so $S$ is an optimal sequence to flood $T$ in which all moves are played in $\bare(T)$, as required.
\end{proof}

\subsection{The algorithm}
\label{algorithm}

In this section we describe our algorithm to solve $c$-\textsc{Free-Flood-It} on $2 \times n$ boards, and use results from the previous section to prove its correctness.

We begin with some further definitions.  For any section $B[b_1,b_2]$, we define $\mathcal{T}[b_1,b_2]$ to be the set of all spanning trees for $B[b_1,b_2]$.  Given any $2 \times n$ Flood-It board $B$, corresponding to a graph $G$ with colouring $\omega$ from colour-set $C$, we define a set of vectors $Z(B)$, where
\begin{align*}
Z(B) = \{(b_1,b_2, & r_1, r_2, d, I): \\
                & B[b_1,b_2] \text{ is a section}, \\
                & r_1, r_2 \in V(B[b_1,b_2]), \\
                & \exists T \in \mathcal{T}[b_1,b_2] \text{ such that } \bare(T) \subseteq P(T,r_1,r_2), \\
                & r_1 \text{ incident with } b_1, r_2 \text{ incident with } b_2, \\
                & d \in C, \\
                & I \subseteq C \}.
\end{align*}
Note that there always exists a tree $T \in \mathcal{T}[b_1,b_2]$ such that $\bare(T) \subseteq P(T,r_1,r_2)$ unless one of the following holds:
\begin{enumerate}
\item there is more than one vertex of $B[b_1,b_2]$ lying strictly to the left of $r_1$ or strictly to the right of $r_2$, or
\item there is exactly one vertex of $B[b_1,b_2]$ lying strictly to the left of $r_1$ (respectively to the right of $r_2$), which is not adjacent to $r_1$ (respectively $r_2$) and whose neighbour in the same column as $r_1$ (respectively $r_2$) has no neighbour in $B[b_1,b_2]$ other than $r_1$ (respectively $r_2$).
\end{enumerate}
Thus we can check whether this condition is satisfied in constant time.

We now introduce a function $f$ which is closely related to the minimum number of moves required to flood a $2 \times n$ board.  For any $\mathbf{z}=(b_1,b_2,r_1,r_2,d,I) \in Z(B)$ we define $f(\mathbf{z})$ to be the minimum, taken over all $T \in \mathcal{T}[b_1,b_2]$ such that $\bare(T) \subseteq P(T,r_1,r_2)$, of the number of moves that must be played in $P(T,r_1,r_2)$ to flood $P(T,r_1,r_2)$ with colour $d$, and link to $P(T,r_1,r_2)$ all leaves of $T$ that do not have colours from $I$.

It follows immediately from Lemmas \ref{leafy-path} and \ref{no-leaf-moves} that 
$$m(G,\omega) = \min_{\substack{d \in C \\ r_1 \text{ incident with } b_L \\ r_2 \text{ incident with } b_R}} f(b_L,b_R,r_1,r_2,d,\emptyset).$$

Our algorithm in fact computes recursively a function $f^*$, with the same parameters as $f$.  We will argue that, for every $\mathbf{z} \in Z(B)$, $f^*(\mathbf{z}) = f(\mathbf{z})$ and hence that it suffices to compute all values of $f^*$ in order to calculate $m(G,\omega)$.  

The first step of the algorithm is to initialise certain values of $f^*$ to zero.  We set $f^*(b_1,b_2,r_1,r_2,d,I) = 0$ if and only if, under the initial colouring, there exists a $r_1$-$r_2$ path of colour $d$ in $B[b_1,b_2]$, and all vertices in $B[b_1,b_2]$ that do not lie on this path are adjacent to the path and have colours from $I \cup \{d\}$.  All other values of $f^*(\mathbf{z})$ are initially set to infinity.  Note that, under this definition, $f^*(\mathbf{z}) = 0 \iff f(\mathbf{z}) = 0$, and that for each $\mathbf{z} \in Z(B)$ we can easily determine in time $O(n)$ whether $f^*$ should be initialised to zero or infinity.

In order to define further values of $f^*$, we introduce two more functions.  First, for any $\mathbf{z} = (b_1,b_2,r_1,r_2,d,I) \in Z(B)$, we set
$$f_1(b_1,b_2,r_1,r_2,d,I) = 1 + \min_{d' \in C} \{f^*(b_1,b_2,r_1,r_2,d',I \cup \{d\})\}.$$
We also define, for any $\mathbf{z} \in Z(B)$, 
\begin{align*}
f_2(b_1,b_2,r_1, & r_2,d,I) = \\
& \min_{\substack{(b_1,b,r_1,x_1,d,I) \in Z(B) \\ (b,b_2,x_2,r_2,d,I) \in Z(B) \\ b_1 < b < b_2 \\ x_1x_2 \in E(G)}} \{f^*(b_1,b,r_1,x_1,d,I) + f^*(b,b_2,x_2,r_2,d,I)\}.
\end{align*}
Finally, we set
$$f^*(\mathbf{z}) = \min \{f_1(\mathbf{z}), f_2(\mathbf{z})\}.$$

For the reasoning below, it will be useful to introduce another function $\theta$, taking the same parameters as $f$ and $f^*$.  For any $\mathbf{z} = (b_1,b_2,r_1,r_2,d,I) \in Z(B)$, we define
$$\theta(\mathbf{z}) = f^*(\mathbf{z}) + |B[b_1,b_2]|.$$

In the following two lemmas, we show that $f^*(\mathbf{z}) = f(\mathbf{z})$ for all $\mathbf{z} \in Z(B)$, as claimed.  We begin by demonstrating that $f^*(\mathbf{z})$ gives an upper bound for $f(\mathbf{z})$.

\begin{lma}
Let $G$ with colouring $\omega$ (from colour-set $C$) be the coloured graph corresponding to a $2 \times n$ Flood-It board $B$.  Then
$$f(\mathbf{z}) \leq f^*(\mathbf{z})$$
for all $\mathbf{z} = (b_1,b_2,r_1,r_2,d,I) \in Z(B)$.
\label{f*>=f}
\end{lma}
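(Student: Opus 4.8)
The plan is to prove the inequality by induction on the quantity $\theta(\mathbf{z})$ introduced above. If $f^*(\mathbf{z}) = \infty$ the bound is trivial, so I assume $f^*(\mathbf{z})$ is finite and argue by strong induction on $\theta(\mathbf{z}) \in \mathbb{Z}_{\geq 0}$. The base case is $f^*(\mathbf{z}) = 0$: the initialisation of $f^*$ was arranged precisely so that $f^*(\mathbf{z}) = 0 \iff f(\mathbf{z}) = 0$, which settles this case. When $f^*(\mathbf{z}) > 0$ we have $f^*(\mathbf{z}) = \min\{f_1(\mathbf{z}), f_2(\mathbf{z})\}$, so $f^*(\mathbf{z})$ equals either $f_1(\mathbf{z})$ or $f_2(\mathbf{z})$, and I treat the two cases separately. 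The key bookkeeping point is that $\theta$ strictly decreases when passing to the subproblems defining $f_1$ and $f_2$: in the $f_1$ case the section $B[b_1,b_2]$ (and hence its border count) is unchanged while $f^*$ drops by one; in the $f_2$ case, writing $N$, $n_1$, $n_2$ for the numbers of borders strictly between $b_1,b_2$, between $b_1,b$, and between $b,b_2$, we have $N = n_1 + n_2 + 1$, so the two sub-values of $\theta$ sum to $\theta(\mathbf{z}) - 1$ and each is therefore strictly smaller than $\theta(\mathbf{z})$.

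Suppose first that $f^*(\mathbf{z}) = f_1(\mathbf{z}) = 1 + f^*(b_1,b_2,r_1,r_2,d',I \cup \{d\})$ for the minimising colour $d'$. As the subproblem has $\theta$-value $\theta(\mathbf{z}) - 1$, the inductive hypothesis yields a tree $T \in \mathcal{T}[b_1,b_2]$ with $\bare(T) \subseteq P(T,r_1,r_2)$ and a sequence of $f^*(b_1,b_2,r_1,r_2,d',I\cup\{d\})$ moves, all played in $P(T,r_1,r_2)$, that floods the path with $d'$ and links every leaf of $T$ whose colour lies outside $I \cup \{d\}$. I then append a single move recolouring the (now monochromatic) path from $d'$ to $d$. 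After this move the path has colour $d$; every leaf that was already linked follows the path to colour $d$; and every leaf of colour $d$ (which was permitted to be unlinked in the subproblem) is now adjacent to a $d$-coloured path and hence absorbed into it. Thus $T$ witnesses $f(\mathbf{z})$ using $f^*(b_1,b_2,r_1,r_2,d',I\cup\{d\}) + 1 = f^*(\mathbf{z})$ moves, as required.

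Now suppose that $f^*(\mathbf{z}) = f_2(\mathbf{z}) = f^*(b_1,b,r_1,x_1,d,I) + f^*(b,b_2,x_2,r_2,d,I)$, realised by a border $b$ and an edge $x_1x_2 \in E(G)$ with $x_1 \in B[b_1,b]$ and $x_2 \in B[b,b_2]$. Applying the inductive hypothesis to each (strictly smaller) subproblem gives spanning trees $T_1$ of $B[b_1,b]$ and $T_2$ of $B[b,b_2]$, with $\bare(T_1) \subseteq P(T_1,r_1,x_1)$ and $\bare(T_2) \subseteq P(T_2,x_2,r_2)$, together with witnessing sequences $S_1,S_2$ of lengths $f(b_1,b,r_1,x_1,d,I) \leq f^*(b_1,b,r_1,x_1,d,I)$ and $f(b,b_2,x_2,r_2,d,I) \leq f^*(b,b_2,x_2,r_2,d,I)$. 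Let $T = T_1 \cup T_2 \cup \{x_1x_2\}$; since $b$ is an edge-cut of $G$ crossed in $T$ only by $x_1x_2$, this is a spanning tree of $B[b_1,b_2]$ whose path $P(T,r_1,r_2)$ is the concatenation of $P(T_1,r_1,x_1)$ and $P(T_2,x_2,r_2)$ along $x_1x_2$, with $\bare(T) \subseteq P(T,r_1,r_2)$, so $T$ is a legitimate witness tree for $\mathbf{z}$. To combine the two sequences, I pass to flooding numbers rather than reasoning about the interaction across $x_1x_2$ directly. Let $H_1$ be the subtree of $T_1$ induced by $P(T_1,r_1,x_1)$ together with all leaves of $T_1$ whose colour lies outside $I$, and define $H_2$ analogously from $P(T_2,x_2,r_2)$; put $H = H_1 \cup H_2 \cup \{x_1x_2\}$. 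Since $S_i$ floods the path of $T_i$ with $d$ and links every non-$I$ leaf (a linked leaf shares the colour $d$ of the path), and since the omitted $I$-leaves are pendant on path vertices and so never alter the monochromatic components met inside $V(H_i)$, playing $S_i$ in $H_i$ makes $H_i$ entirely $d$; hence $m(H_i,\omega,d) \leq |S_i|$. As $H[V(H_1)] = H_1$ and $H[V(H_2)] = H_2$ are connected and cover $V(H)$, Corollary \ref{non-interference} gives $m(H,\omega,d) \leq m(H_1,\omega,d) + m(H_2,\omega,d) \leq |S_1| + |S_2| = f^*(\mathbf{z})$. Finally, by the argument of Lemma \ref{no-leaf-moves} there is an optimal sequence flooding $H$ with colour $d$ all of whose moves lie in $\bare(H) \subseteq P(T,r_1,r_2)$; this sequence floods the path with $d$ and links every non-$I$ leaf of $T$, and therefore witnesses $f(\mathbf{z}) \leq m(H,\omega,d) \leq f^*(\mathbf{z})$, completing the induction.

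The main obstacle is the $f_2$ case. The naive approach of forming $T$ and simply concatenating $S_1$ and $S_2$ runs into the difficulty that a move of $S_1$ may recolour part of $T_2$ whenever its monochromatic component crosses $x_1x_2$, so $S_2$ would no longer be played from the colouring for which it was constructed; tracking this dynamic interference would be delicate. The device above avoids it entirely: by lifting to the flooding numbers $m(H_i,\omega,d)$ and invoking Corollary \ref{non-interference} (which already absorbs any interaction between the two parts) and then confining the moves to $\bare(H)$ via Lemma \ref{no-leaf-moves}, one never needs to reason about the process across the border. The two supporting claims that remain — that $m(H_i,\omega,d) \leq |S_i|$ and that the trunk-confined flooding of $H$ is a genuine witness for $f(\mathbf{z})$ — are then routine.
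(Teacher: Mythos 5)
Your overall skeleton is the paper's proof almost line for line: the same induction on $\theta(\mathbf{z})$, the same base case via the initialisation, an identical $f_1$ case (append one recolouring move; leaves of colour $d$ are absorbed), and in the $f_2$ case the same construction $T = T_1 \cup T_2 \cup \{x_1x_2\}$, pruned to the path together with the leaves that must be linked, followed by an appeal to Corollary \ref{non-interference}. (The paper prunes to the subtrees $T_i'$ actually given colour $d$ by $S_i$ rather than your $H_i$, an immaterial difference, and your bookkeeping $\theta(\mathbf{z}_1')+\theta(\mathbf{z}_2')=\theta(\mathbf{z})-1$ is if anything tidier than the paper's remark that the border counts strictly decrease. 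Your observation that deleting pendant leaves does not disturb the dynamics, so $m(H_i,\omega,d)\leq |S_i|$, is also fine.)

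The genuine gap is your final step. You claim, ``by the argument of Lemma \ref{no-leaf-moves}'', that there is an \emph{optimal} sequence flooding $H$ with colour $d$ all of whose moves lie in $\bare(H)$, and conclude $f(\mathbf{z}) \leq m(H,\omega,d)$. That colour-specific statement is false, and the argument of Lemma \ref{no-leaf-moves} does not adapt: that lemma concerns $m(T,\omega)$, where the final colour is unconstrained, so the terminal cycling through $\col(U,\omega)$ may stop at whichever colour it likes; with a prescribed target $d$ the cycle must return to $d$, which can cost one extra move. Concretely, let $H$ be the path $v_1v_2$ with both vertices coloured $d$ and a single pendant leaf $u$ of colour $c \neq d$, $c \notin I$. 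Then $m(H,\omega,d)=1$ (play $(u,d)$ at the leaf), but any sequence confined to the path needs two moves: the path must pass through colour $c$ to link $u$ and then return to $d$. So the path-confined quantity that $f$ counts can strictly exceed $m(H,\omega,d)$, and your chain $f(\mathbf{z}) \leq m(H,\omega,d) \leq |S_1|+|S_2|$ breaks at the first inequality. (The paper's own wording at this point, $f(b_1,b_2,r_1,r_2,d,I) \leq m(T',\omega,d)$, is similarly terse and open to the same objection; what the step really needs is a \emph{location-preserving} combination of $S_1$ and $S_2$, i.e.\ a version of Corollary \ref{non-interference} whose witnessing sequence plays moves only where $S_1$ and $S_2$ did, hence on $P(T,r_1,r_2)$, proved by the interference analysis across $x_1x_2$ in the style of Lemma \ref{compatible-ordering}.) In other words, the ``delicate tracking of dynamic interference'' you set out to avoid is exactly the content of this case: passing through the unrestricted flooding number and then trying to re-confine the moves to the trunk cannot work, because once the target colour is fixed, confinement to the trunk genuinely costs moves.
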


\begin{proof}
We proceed by induction on $\theta(\mathbf{z})$.  Recall that we have equality between $f(\mathbf{z})$ and $f^*(\mathbf{z})$ whenever $f^*(\mathbf{z}) = 0$, so certainly the base case for $\theta(\mathbf{z})=0$ must hold.  Assume therefore that $f^*(\mathbf{z}) > 0$, and that the result holds for all $\mathbf{z}'$ with $\theta(\mathbf{z}') < \theta(\mathbf{z})$.

Since $f^*(\mathbf{z}) > 0$, we must have $f^*(\mathbf{z}) \in \{f_1(\mathbf{z}),f_2(\mathbf{z})\}$.  Suppose first that $f^*(\mathbf{z}) = f_1(\mathbf{z})$.  Then, for some $d' \in C$,
\begin{align*}
f^*(b_1,b_2,r_1,r_2,d,I) & = 1 + f^*(b_1,b_2,r_1,r_2,d',I \cup \{d\}) \\
						 & \qquad \qquad \qquad \qquad \text{by definition of $f_1$} \\
                         & \geq 1 + f(b_1,b_2,r_1,r_2,d',I \cup \{d\}) \\
                         & \qquad \qquad \qquad \qquad \text{by inductive hypothesis.}
\end{align*}
But then we know, by definition of $f$, that there exists $T \in \mathcal{T}[b_1,b_2]$ and a sequence $S$ of $f(b_1,b_2,r_1,r_2,d',I \cup \{d\})$ moves, all played in $P(T,r_1,r_2)$, which, when played in $T$, floods $P(T,r_1,r_2) \supseteq \bare(T)$ with colour $d'$ and links all leaves to $P(T,r_1,r_2)$ except possibly those with colours from $I \cup \{d\}$.  By appending one further move to $S$, which changes the colour of $P(T,r_1,r_2)$ to $d$, we obtain a sequence $S'$ of length $f(b_1,b_2,r_1,r_2,d',I \cup \{d\}) + 1$ (with all moves played in $P(T,r_1,r_2)$) which, when played in $T$, floods $P(T,r_1,r_2)$ with colour $d$ and is such that all leaves of $T$ not linked to $P(T,r_1,r_2)$ by $S$ have colours from $I$.  Hence $f(b_1,b_2,r_1,r_2,d,I) \leq |S'| = 1 + f(b_1,b_2,r_1,r_2,d',I \cup \{d\})$, and so $f(b_1,b_2,r_1,r_2,d,I) \leq f^*(b_1,b_2,r_1,r_2,d,I)$, as required.

Now suppose that $f^*(\mathbf{z}) = f_2(\mathbf{z})$.  Then, by definition of $f_2$, there must exists a border $b$ with $b_1 < b < b_2$, and an edge $x_1x_2 \in E(G)$, such that $(b_1,b,r_1,x_1,d,I)$, $(b,b_2,x_2,r_2,d,I) \in Z(B)$ and 
$$f^*(b_1,b_2,r_1,r_2,d,I)  = f^*(b_1,b,r_1,x_1,d,I) + f^*(b,b_2,x_2,r_2,d,I).$$
Note that $|B[b_1,b]|$ and $|B[b,b_2]|$ are both strictly smaller than $|B[b_1,b_2]|$, so by the inductive hypothesis we have
$$f^*(b_1,b_2,r_1,r_2,d,I) \geq f(b_1,b,r_1,x_1,d,I) + f(b,b_2,x_2,r_2,d,I).$$
By definition of $f$, there exist trees $T_1 \in \mathcal{T}[b_1,b]$ and $T_2 \in \mathcal{T}[b,b_2]$, and sequences $S_1$ and $S_2$ of length $f(b_1,b,r_1,x_1,d,I)$ and $f(b,b_2,x_2,r_2,d,I)$ respectively, such that (for $i \in \{1,2\}$) all moves of $S_i$ are played in $P(T_i,r_i,x_i)$ and $S_i$ floods $P(T_i,r_i,x_i)$ with colour $d$, additionally linking all leaves of $T_i$ to $P(T_i,r_i,x_i)$ except possibly those with colours from $I$.  Now set $T = T_1 \cup T_2 \cup \{x_1x_2\}$.  It is clear that $T \in \mathcal{T}[b_1,b_2]$, and moreover that $\bare(T) \subseteq P(T,r_1,r_2)$.  Suppose $T_1'$ and $T_2'$ are the subtrees of $T_1$ and $T_2$ respectively that are given colour $d$ by $S_1$ and $S_2$, and set $T' = T_1' \cup T_2' \cup \{x_1x_2\}$.  Note that $P(T,r_1,r_2) \subseteq T'$ and that $\col(T \setminus T', \omega) = \col(T_1 \setminus T_1',\omega) \cup \col(T_2 \setminus T_2', \omega) \subseteq I$, so $f(b_1,b_2,r_1,r_2,d,I) \leq m_{T'}(T',\omega,d)$. We can then apply Corollary \ref{non-interference} to see that
\begin{align*}
f(b_1,b_2,r_1,r_2,d,I) & \leq m_{T'}(T',\omega,d) \\
                       & \leq m_{T_1'}(T_1',\omega,d) + m_{T_2'}(T_2',\omega,d) \\
                       & \leq |S_1| + |S_2| \\
                       & = f(b_1,b,r_1,x_1,d,I) + f(b,b_2,x_2,r_2,d,I) \\
                       & \leq f^*(b_1,b_2,r_1,r_2,d,I),
\end{align*}
completing the proof.
\end{proof}

Next we show that the reverse inequality also holds.

\begin{lma}
Let $G$ with colouring $\omega$ (from colour-set $C$) be the coloured graph corresponding to a $2 \times n$ Flood-It board $B$.  Then
$$f(\mathbf{z}) \geq f^*(\mathbf{z})$$
for all $\mathbf{z} = (b_1,b_2,r_1,r_2,d,I) \in Z(B)$.
\label{f*<=f}
\end{lma}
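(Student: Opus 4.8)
The plan is to prove the reverse inequality by induction, mirroring the two ways in which $f^*$ is built up. Because the recursion defining $f_2$ splits a section into two subsections with strictly fewer borders without necessarily reducing the move count, I would induct on the quantity $\mu(\mathbf{z}) = f(\mathbf{z}) + |\{b : b_1 < b < b_2\}|$ rather than on $f(\mathbf{z})$ alone. The base case is $f(\mathbf{z}) = 0$: here the initialisation of $f^*$ was designed precisely so that $f^*(\mathbf{z}) = 0 \iff f(\mathbf{z}) = 0$, so $f^*(\mathbf{z}) = 0 = f(\mathbf{z})$.

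For the inductive step, fix a tree $T \in \mathcal{T}[b_1,b_2]$ with $\bare(T) \subseteq P(T,r_1,r_2)$ and an optimal sequence $S$, all of whose moves are played in $P := P(T,r_1,r_2)$, realising $f(\mathbf{z})$; let $\alpha$ be its final move. I would split into two cases according to whether $P$ is already monochromatic immediately before $\alpha$. In the first case $\alpha$ recolours the whole trunk from some colour $d'$ to $d$, and $S \setminus \alpha$ floods $P$ with $d'$. The only leaves whose linking could have depended on $\alpha$ are those of colour $d$, and these are exactly the colours newly permitted to remain unlinked in the subproblem $(b_1,b_2,r_1,r_2,d',I \cup \{d\})$; hence $S \setminus \alpha$ witnesses $f(b_1,b_2,r_1,r_2,d',I\cup\{d\}) \le f(\mathbf{z}) - 1$. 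Since $\mu$ has strictly decreased, the inductive hypothesis and the definition of $f_1$ give $f^*(\mathbf{z}) \le f_1(\mathbf{z}) \le 1 + f(b_1,b_2,r_1,r_2,d',I\cup\{d\}) \le f(\mathbf{z})$.

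In the second case $P$ is not monochromatic before $\alpha$, so immediately before $\alpha$ the trunk consists of several monochromatic intervals, all of colour $d$ apart from the single interval $M$ that $\alpha$ recolours. Here I would aim to match the recursion for $f_2$: using the special property of $2 \times n$ boards that a path from the leftmost to the rightmost column crosses each border exactly once, I would locate an internal border $b$ whose unique crossing edge $x_1 x_2 \in E(G)$ lies between two vertices such that no move of $S$ ever recolours a monochromatic component containing both $x_1$ and $x_2$. Partitioning $S$ into the subsequences played strictly to the left and strictly to the right of $b$ then yields two non-interacting sequences that flood the two sub-trunks $P(T,r_1,x_1)$ and $P(T,x_2,r_2)$ with colour $d$ and link the corresponding leaves, so that each subsection has $f$-value bounded by the length of its subsequence. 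As both subsections contain strictly fewer internal borders, the inductive hypothesis applies and the definition of $f_2$ gives $f^*(\mathbf{z}) \le f_2(\mathbf{z}) \le f(\mathbf{z})$.

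The main obstacle is this second case: establishing that such a clean splitting border always exists when $\alpha$ does not monochromatise the trunk, and that the two halves genuinely do not interfere, so that the leaf-linking conditions are preserved on each side. This is where the geometry of $2 \times n$ boards must be used carefully — in particular that a border corresponds to an edge-cut met exactly once by the trunk, and that before $\alpha$ only one non-$d$ interval survives, so that $\alpha$ itself never needs to reach across the chosen border. Degenerate single-column sections, where no internal border is available to split at, should be treated separately, either as additional base cases or by verifying directly that the first case always applies to them.
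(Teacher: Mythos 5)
Your proof is correct, and it follows the paper's skeleton -- induction anchored at $f(\mathbf{z})=0$ (where the initialisation gives equality), case analysis on the final move $\alpha$, with the monochromatic-trunk case handled via $f_1$ exactly as in the paper -- but your non-monochromatic case is organised differently, and the difference is real. The paper excludes $\alpha$ from \emph{all} subsequences: writing $A_1,A_2,A_3$ for the trunk components before $\alpha$ (there are at most three, since every component other than the one $\alpha$ recolours already has colour $d$, and on a path such $d$-intervals can only flank the recoloured one), it splits at \emph{two} borders, bounds the middle piece by $f(b,b',x_2,y_1,d',I\cup\{d\})\le|S_2|$, recovers the $+1$ for $\alpha$ through $f_1$, and chains the three pieces with two applications of $f_2$; because $\alpha$ lies in none of the $S_i$, every subproblem has $f$-value at most $|S|-1$, so plain induction on $f(\mathbf{z})$ suffices. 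You instead make a single split and absorb $\alpha$ into the half containing the recoloured interval $M$; this is precisely why induction on $f$ alone would fail (that half's $f$-value need not drop, e.g.\ when the other half requires no moves), and why your augmented measure $\mu(\mathbf{z})=f(\mathbf{z})+|\{b: b_1<b<b_2\}|$ is genuinely needed -- pleasingly, it is the same device ($\theta$) the paper uses to prove the opposite inequality in Lemma \ref{f*>=f}. Your two flagged obstacles both resolve. Non-interference holds because monochromatic components never split, so the component of any pivot at an earlier time is contained in its component just before $\alpha$; hence moves played in $A_i$ recolour only vertices of $\bar{A_i}$, and until $\alpha$ no component contains both endpoints of the crossing edge, so your splitting criterion is met by the border between $M$ and an adjacent $d$-interval. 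And the worry about degenerate single-column sections is unfounded: a border may run horizontally along the line between the two rows, so whenever the trunk has two distinct components an internal border separating them exists, even within a single column (the paper's ``Observe that there must exist borders $b$ and $b'$'' leaves this equally implicit).
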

\begin{proof}
We proceed by induction on $f(\mathbf{z})$, noting again that we have equality in the base case for $f(\mathbf{z}) = 0$.  Suppose that $f(\mathbf{z}) > 0$, and that the result holds for $\mathbf{z}'$ whenever $f(\mathbf{z}') < f(\mathbf{z})$.  By definition, there exists a tree $T \in \mathcal{T}[b_1,b_2]$ and a sequence $S$ of length $f(b_1,b_2,r_1,r_2,d,I)$ such that $\bare(T) \subseteq P(T,r_1,r_2)$, all moves of $S$ are played in $P(T,r_1,r_2)$, and $S$ floods $P(T,r_1,r_2)$ with colour $d$, leaving only leaves with colours from $I$ not linked to $P(T,r_1,r_2)$.  We proceed by case analysis on $\alpha$, the final move of $S$.

Suppose first that $P(T,r_1,r_2)$ is already monochromatic before $\alpha$, and that this final move just changes its colour to $d$ from some $d' \in C$ (possibly flooding some additional leaves of colour $d$ in the process).  In this case it is clear that $f(b_1,b_2,r_1,r_2,d',I \cup \{d\}) \leq |S| - 1$ and so we can apply the inductive hypothesis to see that $f^*(b_1,b_2,r_1,r_2,d',I \cup \{d\}) \leq f(b_1,b_2,r_1,r_2,d',I \cup \{d\})$.  But then, by definition of $f_1$, we know that 
\begin{align*}
f^*(b_1,b_2,r_1,r_2,d,I) & \leq 1 + f^*(b_1,b_2,r_1,r_2,d',I \cup \{d\}) \\
                         & \leq 1 + f(b_1,b_2,r_1,r_2,d',I \cup \{d\}) \\
                         & \leq 1 + |S| - 1 \\
                         & = f(b_1,b_2,r_1,r_2,d,I),
\end{align*}
as required.

So we may assume that $P(T,r_1,r_2)$ is not monochromatic before $\alpha$: it may have either two or three monochromatic components.  Suppose first that $P(T,r_1,r_2)$ has exactly three monochromatic components before $\alpha$ is played, $A_1$, $A_2$ and $A_3$; we may assume that $A_1$ and $A_3$ have colour $d$ before $\alpha$, and that this final move gives $A_2$ colour $d$ to flood the entire path.  For $i \in \{1,2,3\}$, set $S_i$ to be the subsequence of $S \setminus \alpha$ consisting of moves played in $A_i$, and set $\bar{A_i}$ to be $A_i$ together with all leaves of $T$ that lie in the same column as a vertex of $A_i$ or whose only neighbour on $P(T,r_1,r_2)$ is in $A_i$.  Note that that $\bar{A_1}$, $\bar{A_2}$ and $\bar{A_3}$ partition the vertex set of $T$, and that $S_1$, $S_2$ and $S_3$ partition $S \setminus \alpha$.  We may assume without loss of generality that $r_1 \in A_1$ and $r_2 \in A_3$.  Observe that there must exist borders $b$ and $b'$, with $b_1 < b < b' < b_2$, such that $\bar{A_1} = B[b_1,b]$, $\bar{A_2} = B[b,b']$ and $\bar{A_3} = B[b',b_2]$.  Set $x_1x_2$ to be the edge of $T$ such that $x_1 \in A_1$, $x_2 \in A_2$ and $y_1y_2$ the edge of $T$ such that $y_1 \in A_2$ and $y_2 \in A_3$.

Note that $T[A_1] \in \mathcal{T}[b_1,b]$, $T[A_2] \in \mathcal{T}[b,b']$, and $T[A_3] \in \mathcal{T}[b',b_2]$, and moreover that we have $\bare(T[A_1]) \subseteq P(T[A_1],r_1,x_1)$, $\bare(T[A_2]) \subseteq P(T[A_2],x_2,y_1)$ and $\bare(T[A_3] \subseteq P(T[A_3],y_2,r_2)$.  Observe also that $S_1$ is a sequence of moves played in $P(T[A_1],r_1,x_1)$ that floods $P(T[A_1],r_1,x_1)$ with colour $d$ and links all leaves, except possibly those with colours from $I$, to $P(T[A_1],r_1,x_1)$, so we must have $f(b_1,b,r_1,x_1,d,I) \leq |S_1|$.  Similarly, we see that $f(b',b_2,y_2,r_2,d,I) \leq |S_3|$ and $f(b,b',x_2,y_1,d',I \cup \{d\}) \leq |S_2|$.  Since $|S_1|,|S_2|,|S_3| < |S|$, we can apply the inductive hypothesis to see that
$$f^*(b_1,b,r_1,x_1,d,I) \leq f(b_1,b,r_1,x_1,d,I) \leq |S_1|$$
and
$$f^*(b',b_2,y_2,r_2,d,I) \leq f(b',b_2,y_2,r_2,d,I) \leq |S_3|.$$
The inductive hypothesis also gives
$$f^*(b,b',x_2,y_1,d',I \cup \{d\}) \leq f(b,b',x_2,y_1,d',I \cup \{d\}) \leq |S_2|,$$
and we can then apply the definition of $f_1$ to see that
$$f^*(b,b',x_2,y_1,d,I) \leq 1 + f^*(b,b',x_2,y_1,d',I \cup \{d\}) \leq 1 + |S_2|.$$
Now we can apply the definition of $f^*$ to see that
\begin{align*}
f^*(b_1,b_2,r_1,r_2,d,I) & \leq f_2(b_1,b_2,r_1,r_2,d,I) \\
                         & \leq f^*(b_1,b,r_1,x_1,d,I) + f^*(b,b_2,x_2,r_2,d,I) \\
                         & \leq f^*(b_1,b,r_1,x_1,d,I) + f_2(b,b_2,x_2,r_2,d,I) \\
                         & \leq f^*(b_1,b,r_1,x_1,d,I) + f^*(b,b',x_2,y_1,d,I) \\
                         & \qquad \qquad \qquad \qquad + f^*(b',b_2,y_2,r_2,d,I) \\
                         & \leq 1 + |S_1| + |S_2| + |S_3| \\
                         & = |S| \\
                         & = f(b_1,b_2,r_1,r_2,d,I),
\end{align*}
as required.

For the remaining case, in which $P(T,r_1,r_2)$ has exactly two monochromatic components before $\alpha$, we can use the same reasoning as in the previous case for three components to show that we must once again have $f^*(b_1,b_2,r_1,r_2,d,I) \leq f(b_1,b_2,r_1,r_2,d,I)$, completing the proof.
\end{proof}

The final step to is to show that all values of $f^*$ can be computed in time $O(n^{10}2^{c})$.

\begin{prop}
For any $2 \times n$ Flood-It board $B$, the function $f^*(\mathbf{z})$ can be computed, for all $\mathbf{z} \in Z(B)$, in time $O(n^{10}2^{c})$.
\label{complexity}
\end{prop}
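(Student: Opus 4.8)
The plan is to establish the time bound by a careful count of the work done in computing $f^*$ via dynamic programming. The key observation is that $f^*$ is defined recursively through $f_1$ and $f_2$, together with the base-case initialisation, and we compute all values in an order consistent with the recursion. The natural order is by increasing $\theta(\mathbf{z})$, since both $f_1$ and $f_2$ refer only to arguments $\mathbf{z}'$ with strictly smaller $\theta$-value (for $f_1$ the set $I$ strictly grows while the borders stay fixed; for $f_2$ the number of intermediate borders strictly decreases). First I would bound the number of entries $\mathbf{z} \in Z(B)$ that we must compute. There are $O(n^2)$ choices for each of $b_1, b_2$ (since a border is determined by where it meets the top and bottom), $O(n)$ choices for each of $r_1, r_2$, at most $c$ choices for $d$, and $2^c$ choices for the subset $I \subseteq C$. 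This already suggests a table of size roughly $O(n^6 \cdot c \cdot 2^c)$, and the stated complexity indicates a more refined accounting.

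\emph{Next} I would bound the cost of computing each value $f^*(\mathbf{z})$ given that all smaller entries are known. The initialisation of each entry to $0$ or $\infty$ takes $O(n)$ time, as already noted in the excerpt. For $f_1(\mathbf{z})$ we take a minimum over the $c$ colours $d' \in C$, each a table lookup, so this costs $O(c)$. The dominant cost is $f_2(\mathbf{z})$: here we minimise over all intermediate borders $b$ with $b_1 < b < b_2$ (there are $O(n^2)$ such borders), and over all edges $x_1 x_2 \in E(G)$ crossing $b$ with $x_1, x_2$ incident to $b$. On a $2 \times n$ board each border is crossed by only a constant number of edges, so for a fixed $b$ there are $O(1)$ admissible pairs $(x_1, x_2)$; thus $f_2$ requires $O(n^2)$ table lookups per entry. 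I would then multiply the number of entries by the per-entry cost, being careful to note where the savings over the crude count arise — in particular that the factor $2^c$ from the subset $I$ multiplies the table size but the per-entry work involves no further exponential dependence, and that the genuinely expensive $n$-powers come from the border and root parameters rather than from $I$.

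\emph{The main obstacle} I expect is the bookkeeping needed to reconcile the crude product bound with the sharper $O(n^{11} 2^c)$ claimed in the statement. A naive multiplication of $O(n^6 c\, 2^c)$ entries by $O(n^2)$ per-entry cost gives $O(n^8 c\, 2^c)$, which is \emph{smaller} than the target exponent $11$ — so the subtlety lies in correctly identifying which quantities genuinely range over $O(n^2)$ versus $O(n)$ values, and in verifying that membership $\mathbf{z} \in Z(B)$ (the spanning-tree condition $\exists T$ with $\bare(T) \subseteq P(T,r_1,r_2)$) can be checked in constant time, as asserted earlier in the excerpt via the explicit characterisation of the exceptional cases. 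I would therefore carefully recount, treating each border as carrying two independent coordinates (its top and bottom meeting-points), which inflates the relevant powers of $n$, and then confirm that summing the per-entry costs across the whole table, in the $\theta$-increasing order, telescopes to the stated bound. The remaining verification — that computing $f^*$ in this order indeed yields correct values, so that by Lemmas \ref{f*>=f} and \ref{f*<=f} we have $f^* = f$, and hence $m(G,\omega)$ is recovered — follows immediately from the recursion's well-foundedness under $\theta$ and requires no further argument.
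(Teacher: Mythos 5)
There is a genuine gap at the heart of your schedule. You propose to fill the table in a single pass ``by increasing $\theta(\mathbf{z})$'', but $\theta(\mathbf{z}) = f^*(\mathbf{z}) + |\{b : b_1 < b < b_2\}|$ depends on the very value $f^*(\mathbf{z})$ being computed, so this order is not available before the computation. Nor can it be replaced by a static order: your claim that the $f_1$-dependencies point to smaller $\theta$ ``because $I$ strictly grows'' is doubly mistaken --- $I$ does not occur in $\theta$ at all (the decrease comes from the $+1$ in $f_1$ forcing the referenced entry's $f^*$-value to be one smaller), and since the definition of $Z(B)$ does not require $d \notin I$, when $d \in I$ the term $f^*(b_1,b_2,r_1,r_2,d',I \cup \{d\})$ ranges over entries with the \emph{same} borders and the \emph{same} set $I$, including (for $d' = d$) the entry itself. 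The dependency graph is therefore genuinely cyclic, and the paper handles this with Bellman--Ford-style repeated relaxation: apply the recursive definition simultaneously to every non-zero entry, and show by induction that after $k$ rounds every entry with $\theta(\mathbf{z}) \leq k$ is correct; since $\theta(\mathbf{z}) \leq n^2 + 2n$ (there are $O(n^2)$ borders, and no colouring needs more than $2n$ moves), $O(n^2)$ rounds suffice. This iteration count is exactly the ``missing'' factor that puzzled you: $O(n^6 2^c)$ entries, times $O(cn^3)$ work per entry per round, times $O(n^2)$ rounds, yields the stated $O(n^{11}2^c)$. Your plan to recover the exponent $11$ by ``recounting'' the border coordinates cannot succeed: each border already contributes $n^2$ (its two meeting points), as your own count of $O(n^6 \cdot c \cdot 2^c)$ entries correctly reflects.

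A secondary error compounds this: on a $2 \times n$ board a border is \emph{not} crossed by $O(1)$ edges. A border meeting the top boundary at position $i$ and the bottom at position $j$ runs along the horizontal midline between them and is crossed by all $|i-j|$ vertical edges in the intervening columns, in addition to two horizontal edges --- $\Theta(n)$ edges in the worst case. Hence the minimisation in $f_2$ ranges over $O(n^2)$ borders times $O(n)$ crossing edges, i.e.\ $O(n^3)$ candidates per entry (as in the paper), not $O(n^2)$. Finally, your concluding sentence --- that correctness ``follows immediately from the recursion's well-foundedness under $\theta$'' --- is precisely the step requiring proof: well-foundedness with respect to an a priori unknown potential function cannot be converted into an evaluation order, and the paper supplies the needed argument by observing that initialisation makes all entries with $\theta(\mathbf{z}) = 0$ correct and that each relaxation round extends correctness by one $\theta$-level.
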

\begin{proof}
We compute values of $f^*$ recursively using a dynamic programming technique.  Our table has one entry for each pair of borders, for each possible vertex incident with each of the borders, for each colour in the colour-set and for each possible subset of colours, so the total number of entries is at most 
$$O(n^2 \cdot n^2 \cdot n \cdot n \cdot c \cdot 2^c) = O(n^6 c 2^c).$$

The table is initialised by setting all values to either zero or infinity, and for each entry we can determine which of these values it should take in time at most $O(n)$, so we can initialise the entire table in time $O(n^7 c 2^c)$.

The next step is to apply the recursive definition of $f^*$ repeatedly to all entries in the table that are not already set to zero.  Each time we apply this definition to a single entry, we take the minimum of at most $O(c + n^3)$ values (one for each choice of colour, plus one for each combination of a border and a pair of adjacent vertices on either side), each a combination of at most two other entries in the table, so each entry can be calculated in time $O(c + n^3)$.  We can therefore perform one iteration in which we apply the definition to each non-zero entry in the table in time $O(n^{9} 2^c)$.

Note that once we have initialised the table, we have the correct value of $f^*(\mathbf{z})$ for any $\mathbf{z}$ such that $\theta(\mathbf{z}) = 0$.  Moreover, the value of $f^*(\mathbf{z})$ depends only on values of $f^*(\mathbf{z}')$ where $\theta(\mathbf{z}') < \theta(\mathbf{z})$, so after $k$ iterations we will have correctly computed the value of $f^*(\mathbf{z})$ for all $\mathbf{z}$ with $\theta(\mathbf{z}) \leq k$.  Note that for every $\mathbf{z} \in Z(B)$, $\theta(\mathbf{z}) \leq 4n$, as there are at most $2n$ vertices lying between any pair of borders, and no more than $2n$ moves can be required to flood a graph with at most this many vertices, so $4n$ iterations are sufficient to guarantee we have computed all values of $f^*$ correctly.

Thus, we can compute all values of $f^*(\mathbf{z})$ for $\mathbf{z} \in Z(B)$ in time $O(n^{10} 2^{c})$, as required.
\end{proof}

We now combine the previous three results to give the proof of our main theorem.

\begin{proof}[Proof of Theorem \ref{2xn-fpt}]
Recall that, from the definition of $f$ and Lemmas \ref{leafy-path} and \ref{no-leaf-moves},  
$$m(G,\omega) = \min_{\substack{d \in C \\ r_1 \text{ incident with } b_L \\ r_2 \text{ incident with } b_R}} f(b_L,b_R,r_1,r_2,d,\emptyset).$$
Thus, in order to compute $m(G,\omega)$ in time $O(n^{10} 2^{c})$, it suffices to compute all relevant values of $f$ in time $O(n^{10} 2^{c})$.

However, we know from Lemmas \ref{f*>=f} and \ref{f*<=f} that $f(\mathbf{z}) = f^*(\mathbf{z})$ for all $\mathbf{z} \in Z(B)$, and from Proposition \ref{complexity} we know that we can compute $f^*(\mathbf{z})$ for all $\mathbf{z} \in Z(B)$ in time $O(n^{10} 2^{c})$.  This completes the proof of the theorem. 
\end{proof}

\section{\textsc{Free-Flood-It} on $2 \times n$ boards}
\label{NPhard}

In this section we prove the following theorem.
\begin{thm}
\textsc{Free-Flood-It} remains NP-hard when restricted to $2 \times n$ boards.
\label{2xnNP}
\end{thm}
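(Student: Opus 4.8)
The plan is to give a polynomial-time reduction from an NP-hard combinatorial problem to \textsc{Free-Flood-It} on $2 \times n$ boards. The result is subtle precisely because the fixed variant is solvable in linear time (Clifford et al.), so the hardness must come from the freedom, in the free game, to originate floods anywhere and to merge monochromatic components in many different orders. A second constraint shapes the whole construction: the earlier NP-hardness proofs for $n \times n$ and $k \times n$ ($k \ge 3$) boards encode many objects as separate rows, but a $2 \times n$ board has only two rows, so I cannot use one row per object. Instead I would encode the objects of the source instance as consecutive blocks \emph{along the length} of the board, giving each object a private colour (the number of colours being unbounded), and use the two rows only to install the adjacencies that create dependencies between blocks.

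Concretely, I would build the board block by block from left to right. Each block occupies a constant number of columns and carries one fresh ``payload'' colour for its object, surrounded by a small fixed palette of separator colours whose role is to stop monochromatic components from spilling between blocks except where I want them to. The two rows within and around a block are coloured so that a single flood played in one row can be ``reused'' to remove payload colours elsewhere exactly when the corresponding objects of the source instance are compatible. I would then fix a move budget $k$, polynomial in the input size, so that the board can be flooded in at most $k$ moves if and only if the source instance is a YES-instance.

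For correctness I would treat the two directions separately. In the forward direction I would translate a solution of the source problem directly into a flooding sequence and bound its length using Corollary \ref{non-interference} to glue together the floods of the individual blocks without interference, and Theorem \ref{spanning-tree} where it is convenient to reason on a spanning tree. The reverse direction is a lower bound: I must show that any flooding sequence of length at most $k$ forces the structure encoding a solution of the source instance. Here I would pass to a spanning tree via Theorem \ref{spanning-tree} and apply a charging argument, assigning each move to the payload colours it eliminates and showing that the separators prevent any move from being charged to two unrelated blocks, so that savings below the trivial bound can only come from the intended reuse of floods across compatible objects.

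I expect this reverse, lower-bound direction to be the main obstacle, for the same reason that the free game is harder than the fixed one: the player may flood blocks in any order and may create large intermediate monochromatic components that span several blocks, so I cannot assume flooding is local. The crux will be a lower bound on $m(G,\omega)$ that is robust against arbitrary move orderings, which I would try to obtain by combining the spanning-tree reformulation with the observation that each private colour must be destroyed at least once; the delicate design work is then to choose the separator colours and block sizes so that the \emph{only} way to beat the budget is through genuine, source-legal alignments, making the savings forced rather than merely available.
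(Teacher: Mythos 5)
Your outline matches the architecture of the paper's proof at every structural level --- a reduction built from constant-width gadgets laid left to right along the board, each padded by fresh separator colours, a budget of the form $N+k$, the upper bound glued together via Corollary \ref{non-interference}, and the lower bound argued on the tree obtained by contracting monochromatic components --- but as it stands it is a plan, not a proof, and the two steps you defer are exactly where the content of the theorem lies. First, you never choose the source problem or the gadget. The paper reduces from Vertex Cover: for each edge $e=uv$ the gadget contains two single-square ``islands'' coloured $u$ and $v$, which become leaves of the contracted tree; by Lemma \ref{no-leaf-moves} all moves may be assumed to lie on the trunk, one island per gadget is absorbed en passant, and the remaining islands can only be cleared by cycling the final component through their colours, so the set of colours left to the end is forced to be a vertex cover. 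Note that, contrary to your phrase ``each object a private colour'', the reuse mechanism requires the vertex colours to be \emph{shared} across all gadgets of edges incident to that vertex; only the padding colours $x_1^e,\ldots,x_r^e$ are private to a gadget. With strictly private payload colours there is nothing for a flood to reuse, and the reduction collapses.

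Second --- and this is the step that would fail as you state it --- separator colours cannot ``stop monochromatic components from spilling between blocks'': in the free game every flooding sequence eventually merges the whole board, so locality cannot be enforced, only priced. The paper's device is quantitative: each edge gadget is padded with $r=2m+|V|$ fresh colours, so that any intermediate monochromatic component containing two same-coloured islands from different gadgets must span at least $2r$ original colours, which forces $|S| \geq mr + 2r - 1 > N + |V|$, beyond any admissible budget. This is precisely the ``lower bound robust against arbitrary move orderings'' that you correctly identify as the crux, but your proposal offers no mechanism for it beyond the stated intention to find one; the charging argument you sketch only goes through once the padding parameter is chosen large relative to the maximum possible saving $|V|$, which is the counting actually carried out in Lemma \ref{strat=>vc}. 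In short: right architecture, but the gadget design and the quantitative anti-merging bound --- the parts that make the theorem true --- are missing.
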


This is somewhat surprising, as we have seen in the previous section that $c$-\textsc{Free-Flood-It} can be solved in polynomial time on $2 \times n$ boards, while \cite{clifford} gives a linear time algorithm to solve \textsc{Fixed Flood It} in this situation.  We demonstrate here that the problem is almost certainly not in \textbf{P} if we remove both these restrictions (that moves are always played at the same vertex, or the number or colours is bounded).  This is the first class of graphs for which such a result has been shown.

The proof is by means of a reduction from Vertex Cover, shown to be NP-hard by Karp in \cite{karp72}.  Given a graph $G=(V,E)$, we construct a $2 \times n$ Flood-It board $B_G$ as follows.

Suppose $E = \{e_1, \ldots, e_m\}$.  For each edge $e=uv \in E$ we construct the gadget $G'_e$, as illustrated in Figure \ref{gadget-G_e'}.  We will refer to the single-square components incident with the bottom edge in $G_e'$ as \emph{islands}.  $G_e'$ is then embedded in the larger gadget $G_e$, as shown in Figure \ref{gadget-G_e}.  Distinct colours $x_1^e, \ldots, x_r^e$ are used for each $e$, where $r = 2m+|V|$.  We then obtain the board $B_G$ by placing these gadgets $G_e$ in a row, as illustrated in Figure \ref{board-2xn-inf-B}.  Observe that we can take $n = m(2r+6) = 2m(2m + |V| + 3)$.  Let us also set $N = mr + 2m -1$.

\begin{figure} [h]
\centering
\includegraphics[width=0.4\linewidth]{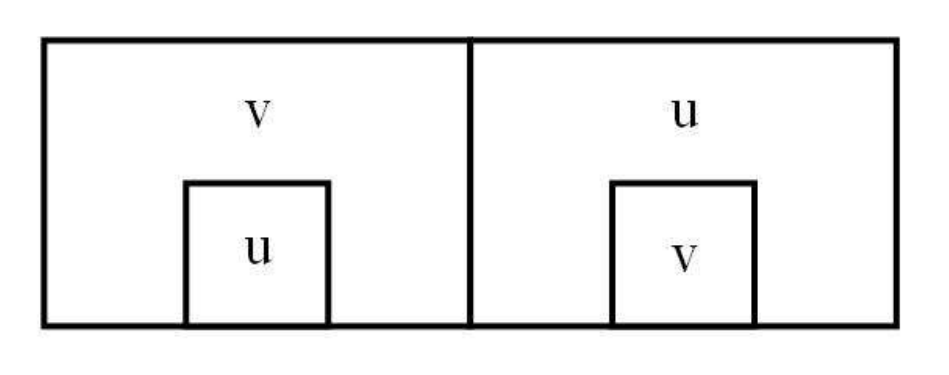}
\caption{The gadget $G_e'$}
\label{gadget-G_e'}
\end{figure}

\begin{figure} [h]
\centering
\includegraphics[width=0.8\linewidth]{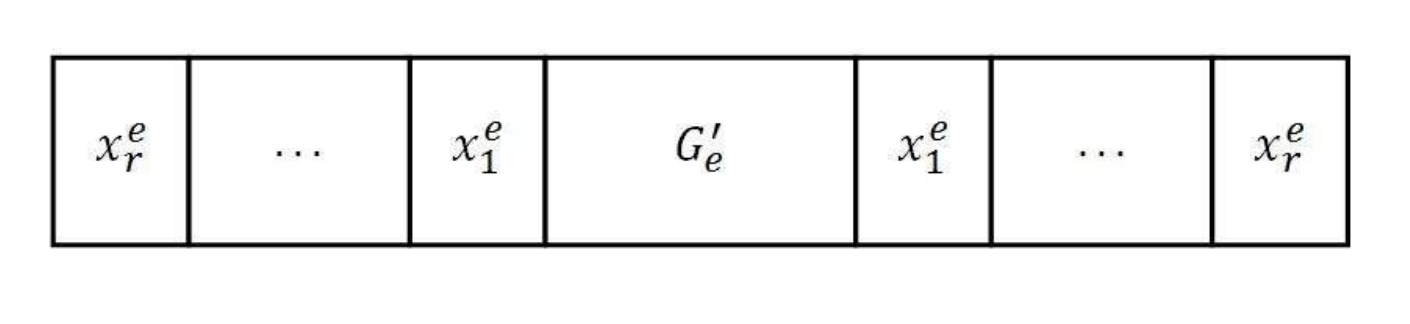}
\caption{The gadget $G_e$}
\label{gadget-G_e}
\end{figure}

\begin{figure} [h]
\centering
\includegraphics[width=0.7\linewidth]{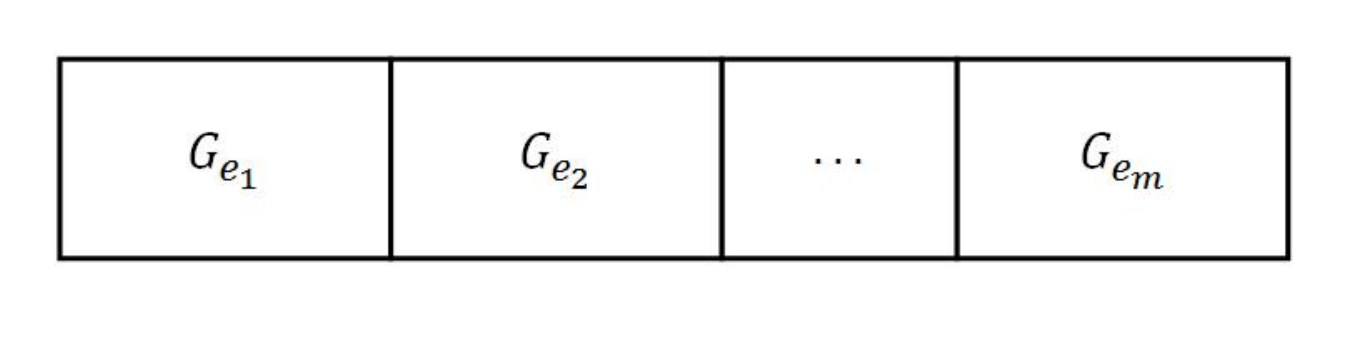}
\caption{The board B}
\label{board-2xn-inf-B}
\end{figure}

We will demonstrate that we can flood this board $B$ in $N + k$ steps if and only if $G$ has a vertex cover of size at most $k$.

\begin{lma}
If $G$ has a vertex cover of size at most $k$, then we can flood the board $B_G$ in $N + k$ steps.
\label{vc=>strat}
\end{lma}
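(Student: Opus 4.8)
The plan is to turn a vertex cover $W \subseteq V(G)$ with $|W| \le k$ directly into an explicit flooding sequence of length $N + |W|$, and then to note that $N + |W| \le N + k$ (padding the sequence with $k - |W|$ redundant recolourings of the final monochromatic board if a sequence of exactly $N+k$ moves is wanted). The sequence splits into two phases: first flood the whole board \emph{except} for the islands into a single monochromatic component (the ``spine''), and then absorb the islands using exactly one move per vertex of $W$.

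For the first phase, observe that deleting the islands from $B_G$ leaves a connected subgraph $T$ whose trunk is, after contracting monochromatic components, a path: this is the structure guaranteed by Lemma \ref{leafy-path}, and it is forced by the way each $G'_e$ is embedded in $G_e$ and the $G_e$ are laid out in a row. By construction this path visits precisely $mr + 2m$ distinct colours --- the $mr$ private colours $x^e_1,\dots,x^e_r$, which are all distinct across the $m$ gadgets, together with the $2m$ structural colour classes contributed by the embeddings. Flooding a rainbow path on $p$ colour classes costs $p-1$ moves, which here --- invoking Theorem \ref{spanning-tree} and Corollary \ref{non-interference} to pass between the board and this path --- gives $(mr+2m) - 1 = N$ moves, independently of $W$. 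I would carry this out so that $T$ ends as a single monochromatic component while every island remains an isolated square of its original colour.

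For the second phase I process the vertices of $W$ one at a time. For $w \in W$ I play a single move recolouring the (now unique) spine component to the colour associated with $w$; by the design of $G'_e$ this triggers, inside every gadget $G_e$ incident with $w$, a cascade that merges all of that gadget's islands into the spine. Since $W$ is a vertex cover, every gadget $G_e = G_{uv}$ has an endpoint in $W$, so each gadget is cleared by the move associated with at least one of its two endpoints; crucially, the move for a fixed $w$ is \emph{shared} by all gadgets incident with $w$, so the total cost of this phase is exactly $|W|$. After these $|W|$ moves the board is monochromatic, giving a flooding sequence of length $N + |W| \le N + k$.

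The main obstacle is the verification that each phase has precisely the claimed cost. For the first phase I must check that the contracted trunk really is a rainbow path with exactly $mr + 2m$ colour classes, with no accidental coincidences among the structural colours that would lower the count and nothing forcing it higher; Lemma \ref{no-leaf-moves} lets me assume all of these moves act on $\bare(T)$, which keeps the bookkeeping clean. For the second phase the delicate point is the cascade claim: I must confirm, directly from the island colours chosen in Figure \ref{gadget-G_e'}, that a single recolouring of the spine to $w$'s colour absorbs \emph{all} islands of each gadget incident with $w$ (and that it is genuinely immaterial which endpoint of each edge is used), and I must order the moves so that the spine is already one component before any sweep and so that no island is stranded by a later recolouring. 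These checks are gadget-specific but routine once the explicit colourings are in hand.
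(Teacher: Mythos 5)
There is a genuine gap in your Phase 2, and it is exactly where the paper makes essential use of the vertex cover. Each gadget $G'_e$, for $e=uv$, contains \emph{two} islands, one of colour $u$ and one of colour $v$ (this is visible in the paper's converse argument, which speaks of the move that eliminates ``the second leaf in each $G'_e$''). A vertex cover $W$ guarantees only that \emph{one} endpoint of each edge lies in $W$. So when you recolour the spine to the colour of $w \in W$, you absorb the $w$-coloured island of each incident gadget, but for an edge $uv$ with $u \in W$ and $v \notin W$ the $v$-coloured island is never touched by any Phase 2 move: your claimed ``cascade that merges all of that gadget's islands'' does not happen, because the two islands of a gadget have distinct colours and a single spine recolouring matches only one of them. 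After your $N + |W|$ moves, up to $m$ islands (one per edge covered by only one endpoint) remain isolated, and flooding them costs up to $|\{u,v : uv \in E\} \setminus W|$ further moves, which can far exceed the slack $k - |W|$. The board is not flooded, so the bound fails.

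The paper's strategy avoids this by absorbing the non-cover island of each gadget \emph{during} the gadget-flooding phase, at no extra cost: the single move that makes $G'_e$ minus one island monochromatic simultaneously merges the other island (its colour matches the colour played), and the vertex cover is used precisely to decide, per gadget, \emph{which} island to spare --- namely the one whose colour $c(e)$ lies in the cover. The per-gadget cost is then $r+1$ moves, linking the $m$ gadget components costs $m-1$ moves (total $m(r+1) + (m-1) = N$, the same $N$ as your Phase 1 but with only one surviving island per gadget, all cover-coloured), and the final cycling through at most $k$ cover colours finishes in $k$ moves. Your Phase 1, by contrast, spends the same $N$ moves while deliberately leaving all $2m$ islands isolated, deferring to a Phase 2 that can only afford cover colours. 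To repair your argument you would have to move the choice of spared island into Phase 1, which reproduces the paper's proof. (Your Phase 1 bookkeeping --- a contracted path on $mr+2m$ colour classes floodable in $N$ moves --- and the padding remark are fine as far as they go; the failure is solely the two-islands-per-gadget issue.)
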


\begin{proof}
First observe that, if $e=uv$, then with $(r+1)$ moves we can flood the gadget $G_e$, except for a single island of colour $c(e) \in \{u,v\}$, so that it is monochromatic in colour $x_r^e$: first play a single move to make all of $G_e'$ except for a single island monochromatic, then play colours $x_1^e, \ldots, x_r^e$ in this central component.  Ignoring the islands for the moment, the components corresponding to each $G_e$ now have distinct colours, so we can link these components with a minimum of $m-1$ moves.  Finally, we need to flood the islands, and this requires exactly $|\{c(e): e \in E\}|$ moves.  But we know that $G$ has a vertex cover of size at most $k$, say $V'$.  By the definition of a vertex cover, if the gadget $G_e'$ uses colours $u$ and $v$, then at least one of $u,v \in V'$.  So for each $G_e$, we may choose to leave an island of colour $d$ where $d \in V'$.  Following this strategy, we are left in the final stage with islands of at most $k$ distinct colours, and can flood these in $k$ steps (by cycling through each colour in turn in the external monochromatic component).  Hence we can flood $B_G$ in $N + k$ steps.
\end{proof}

In order to show the converse, we need one auxiliary result about optimal sequences of moves to flood trees.

\begin{lma}
Let $T$ be any tree, and $\omega$ a colouring of the vertices of $T$.  Then there exists a sequence of moves $S$, of length $m_T(T,\omega,d)$, which makes $T$ monochromatic and in which every move, except possibly the last, strictly decreases the number of maximal monochromatic components of $T$.
\label{no-pointless-moves}
\end{lma}
\begin{proof}
We proceed by induction on $m_T(T,\omega,d)$.  Note that the result is trivially true in the base case, for $m_T(T,\omega,d) = 0$.  Assume, therefore, that $m_T(T,\omega,d) > 0$ and let $S_0$ be any optimal sequence to flood $T$ with colour $d$, and that the result holds for any $T'$ with $m_{T'}(T',\omega,d) < m_T(T,\omega,d)$.  Suppose the first move of $S_0$ is $\alpha$.

First suppose that $\alpha$ strictly decreases the number of maximal monochromatic components of $T$.  Then we can apply the inductive hypothesis to see that there exists a sequence of moves $S'$ that will take $m_T(T,\alpha(\omega,T),d)$ steps to flood $T$ with colour $d$, starting from the colouring $\alpha(\omega,T)$, and such that every move of $S'$ strictly decreases the number of monochromatic components of $T$.  Thus, if we set $S$ to be the sequence of moves in which we first play $\alpha$ and then play $S'$, this sequence $S$ will have the required properties.

Now suppose that this is not the case, so $\alpha = (v,d')$ for some $v \in V(T)$ and $d' \in C$ such that no neighbours of $v$ in $T$ have colour $d'$ under $\omega$.  Note that we may assume without loss of generality that $\omega$ is a proper colouring of $T$, so $\alpha$ only changes the colour of $v$.  If in fact $V(T) = \{v\}$, then any optimal sequence to flood $T$ will have at most one move, so the result is trivially true.  Otherwise, there must be some move which links $v$ to an adjacent monochromatic component.  Set $\beta$ to be the first such move, and let $\bar{S_0}$ be the initial segment of $S_0$ up to and including $\beta$.  

Suppose that $\beta$ links $v$ to a neighbour $u$.  We then define $\alpha'=(v,\col(u,\omega))$, and claim that there exists an optimal sequence of moves to flood $T$ with colour $d$ that has $\alpha'$ as the first move; with such a sequence, we would be in the previous case and so the result would follow by the inductive hypothesis.  Thus it suffices to prove this claim, that is to show that $m_T(T,\alpha'(\omega,T),d) \leq m_T(T,\omega,d) - 1$.  

Let $A_1, \ldots, A_r$ be the monochromatic components of $T$ with respect to the colouring $\bar{S_0}(\omega,T)$, where $A_i$ has colour $d_i$ under this colouring, and assume without loss of generality that $v \in A_1$.  For $1 \leq i \leq r$, set $S_i$ to be the subsequence of $\bar{S_0}$ consisting of moves played in $A_i$; note that these subsequences partition $\bar{S_0}$ and that, for each $i$, we have $m_{A_i}(A_i,\omega,d_i) \leq |S_i|$.

By Lemma \ref{change-colouring} we see that
\begin{align*}
m_T(T,\alpha'(\omega,T),d) & \leq m_T(T,\bar{S_0}(\omega,T),d) + \sum_{i=1}^r m_{A_i}(A_i,\alpha'(\omega,T),d_i) \\
						   & \leq |S_0| - |\bar{S_0}| + \sum_{i=1}^r m_{A_i}(A_i,\alpha'(\omega,T),d_i) \\
						   & = m_T(T,\omega,d) - |\bar{S_0}| + \sum_{i=1}^r m_{A_i}(A_i,\alpha'(\omega,T),d_i),
\end{align*}
and so it suffices to prove that
\begin{equation}
\sum_{i=1}^r m_{A_i}(A_i,\alpha'(\omega,T),d_i) \leq |\bar{S_0}| - 1.
\label{pointless-claim}
\end{equation}

First observe that, for $2 \leq i \leq r$, we have $\alpha'(\omega,T)|_{A_i} = \omega|_{A_i}$, and so
$$m_{A_i}(A_i,\alpha'(\omega,T),d_i) = m_{A_i}(A_i,\omega,d_i) \leq |S_i|.$$
Now consider $A_1$, and note that $A_1$ is not monochromatic before $\beta$; suppose that, immediately before $\beta$, $A_1$ has maximal monochromatic components $A_1^1,\ldots,A_1^l$.  Recall that $v$ is not linked to any other vertex before $\beta$, so without loss of generality suppose that $A_1^1 = \{v\}$; we may further assume that $u \in A_1^2$.

We now set $S_1^j$ to be the subsequence of $S_1$ consisting of moves played in $A_1^j$; note that these subsequences partition $S_1$, and that for each $i$, $m_{A_1^j}(A_1^j,\omega,d_1) \leq |S_1^j|$.  Note further that, for $j \geq 2$, we have $\alpha'(\omega,T)|_{A_1^j} = \omega|_{A_1^j}$, and so 
$$m_{A_1^j}(A_1^j,\alpha'(\omega,T),d_1) = m_{A_1^j}(A_1^j,\omega,d_1) \leq |S_1^j|.$$
Observe also that $T[V(A_1^1 \cup A_1^2)]$ with colouring $\alpha'(\omega,T)$ is identical, after contracting monochromatic components, to $A_1^2$ with colouring $\omega$, and so we see that
$$m_{T[V(A_1^1 \cup A_1^2)]}(T[V(A_1^1 \cup A_1^2)],\alpha'(\omega,T),d_1) = m_{A_1^2}(A_1^2,\omega,d_1).$$
Then, by Corollary \ref{non-interference}, we see that
\begin{align*}
m_{A_1}(A_1,\alpha'(\omega,T),d_1) & \leq m_{T[V(A_1^1 \cup A_1^2)]}(T[V(A_1^1 \cup A_1^2)],\alpha'(\omega,T),d_1) \\
								 & \qquad \qquad \qquad \qquad + \sum_{j=3}^l m_{A_1^j}(A_1^j,\alpha'(\omega,T),d_1) \\
								 & \leq \sum_{j=2}^l m_{A_1^j}(A_1^j,\omega,d_1) \\
								 & \leq \sum_{j=2}^l |S_1^j| \\
								 & = |S_1| - |S_1^1|.
\end{align*}
Note that $\alpha \in S_1^1$, so $|S_1^1| \geq 1$, implying that in fact
$$m_{A_1}(A_1,\alpha'(\omega,T),d) \leq |S_1| - 1.$$
But then
$$\sum_{i=1}^r m_{A_i}(A_i,\alpha'(\omega,T),d_i) \leq (\sum_{i=1}^r |S_i|) - 1 = |\bar{S_0}| - 1,$$
and so \eqref{pointless-claim} holds, which completes the proof.
\end{proof}

Using this result, we now prove that the existence of a short sequence to flood $B_G$ implies the existence of a small vertex cover for $G$.

\begin{lma}
If we can flood $B_G$ in $N+k$ steps (for some $0 \leq k \leq |V|$), then $G$ has a vertex cover of size at most $k$.
\label{strat=>vc}
\end{lma}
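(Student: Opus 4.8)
The converse of Lemma~\ref{vc=>strat} is the genuinely hard direction: we have an arbitrary optimal (or near-optimal) flooding sequence $S$ of length $N+k$, and we must extract from it a vertex cover of size $\leq k$. The strategy is to show that any flooding sequence, regardless of how cleverly it interleaves moves across gadgets, must ``pay'' at least $N$ moves for structural reasons that have nothing to do with the vertex cover, and that the remaining $k$ moves are forced to encode a selection of one endpoint-colour $c(e)\in\{u,v\}$ per edge $e$, with the distinct selected colours forming a cover. The key accounting device should be a lower bound of the form: the number of moves needed is at least $N$ plus the number of \emph{distinct colours among the surviving islands}, where each edge gadget $G_e$ contributes at least one island whose colour lies in $\{u,v\}$.

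\textbf{First steps I would carry out.} I would begin by establishing the rigid ``budget'' of $N = mr + 2m - 1$ moves. The distinct colours $x_1^e,\dots,x_r^e$ used in each of the $m$ gadgets give $mr$ colours that each appear in exactly one gadget; since at least $c-1$ moves are needed when $c$ colours are present, and more sharply since each such colour occupies an isolated region that must be overwritten, I would argue each gadget independently forces roughly $r$ moves to collapse its internal structure, and then a further $\sim 2m-1$ moves are unavoidable to link the $m$ gadget-components together and to deal with the islands. The cleanest way to make this rigorous is via the spanning-tree and non-interference machinery: using Theorem~\ref{spanning-tree} and Corollary~\ref{non-interference} one can decompose the board and bound contributions gadget-by-gadget, showing no sequence can beat $N$ on the ``backbone'' part. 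I would want to prove that in \emph{any} flooding sequence, each gadget $G_e$ retains at least one island of a colour in $\{u,v\}$ that must be separately accounted for, so that the islands collectively cost at least $|\{c(e):e\in E\}|$ moves for some choice function $e\mapsto c(e)\in\{u,v\}$.

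\textbf{Extracting the cover.} Given the lower bound $|S|\geq N + |\{c(e):e\in E\}|$, the hypothesis $|S|=N+k$ forces $|\{c(e):e\in E\}|\leq k$. I would then set $V' = \{c(e):e\in E\}$ and verify it is a vertex cover: by construction $c(e)\in\{u,v\}$ for $e=uv$, so every edge has an endpoint in $V'$, and $|V'|\leq k$. The delicate part is justifying that the choice $c(e)$ is genuinely well-defined from $S$ and that the islands of a gadget cannot all be flooded ``for free'' by being absorbed into moves already counted toward the backbone cost $N$ --- this is exactly where the gadget geometry (the islands being single-square components incident with the bottom edge, isolated from each other) must be used to show that flooding an island of colour $d$ either costs a dedicated move or forces the backbone to take colour $d$ at some stage, and that these cannot be shared across two islands of different colours without extra cost.

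\textbf{The main obstacle.} The crux is ruling out ``clever sharing'': an adversarial sequence might try to flood islands of the same colour $d$ across \emph{different} gadgets using a single global move on a large monochromatic component, thereby making the island cost look like the number of distinct colours globally rather than per-gadget. The argument must show that the structure of $B_G$ --- in particular the distinct private colours $x_i^e$ separating gadgets and the placement of islands --- prevents any island colour from being amortized below the count $|\{c(e)\}|$, and simultaneously that the $N$ backbone moves cannot be reduced to subsidize island flooding. I expect this to require a careful invariant tracking, at each point in the sequence, which colours the evolving monochromatic components carry and a matching/charging argument assigning each of the $N+k$ moves to a distinct structural obligation (one per private colour, one per inter-gadget link, one per distinct surviving island colour), with the vertex-cover property falling out of the constraint that the island-charged moves number at most $k$.
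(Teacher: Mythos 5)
There is a genuine gap: your proposal correctly guesses the shape of the accounting (a forced budget of $N$ plus one extra move per distinct island-colour obligation, with those colours forming the cover), but at the decisive point you only announce what \emph{must} be shown --- that island costs cannot be amortized across gadgets by a single global move --- and defer it to ``a careful invariant tracking'' you never supply. This is exactly the crux, and the paper closes it with a concrete counting argument you are missing: contract monochromatic components to a tree $T$, invoke Lemma~\ref{no-leaf-moves} to assume all moves are played on the path $P$ between the two end components, and then observe that if a single move eliminated two components of $P$ that both originally had colour $v \in V$, earlier moves must have built one monochromatic component $A$ containing both; since the two lie in different gadgets, $A$ spans an entire gadget and hence meets $i \geq 2r$ original colours, forcing at least $i$ moves inside $A$, while at least $mr - (i-2r)$ colours outside $A$ still each require a move. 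This gives $|S| \geq mr + 2r - 1 > N + |V|$, a contradiction --- and it is precisely here that the choice $r = 2m + |V|$ does its work, a quantitative point absent from your sketch. A second, related gap: the cover cannot be read off from ``surviving'' islands, since in any completed flooding \emph{every} island is eventually absorbed; the paper instead defines $\bar{S}$ as the moves eliminating the \emph{second} leaf of each gadget $G_e'$, and charges each such move to a move of $S$ that fails to decrease the number of monochromatic components of $P$, yielding $|S| \geq N + |\bar{S}|$ and hence $|\bar{S}| \leq k$; the cover is the set of colours of these second-eliminated leaves.

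There is also a technical misstep in your proposed route to the budget $N$: you suggest establishing the lower bound ``gadget-by-gadget'' via Theorem~\ref{spanning-tree} and Corollary~\ref{non-interference}, but Corollary~\ref{non-interference} is an \emph{upper} bound ($m(G,\omega,d) \leq m(A,\omega,d) + m(B,\omega,d)$) and gives no lower-bound information; indeed no such subadditive lower bound holds in general, because one move can make progress in several regions at once. The paper's lower bound comes instead from elementary counting on the contracted path $P$ (its number of monochromatic components, equivalently the colour eliminations required), which is why the reduction to $P$ via Lemma~\ref{no-leaf-moves} is the essential first step rather than an optional convenience.
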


\begin{proof}
Suppose the sequence $S$ floods $B_G$, where $|S| = N+k$.  Observe that, if we contract monochromatic components of the coloured graph corresponding to $B_G$, we obtain a tree $T$ (consisting of a path with $2m$ pendant leaf vertices); we will denote by $\omega$ the colouring this tree inherits from $B_G$.  Let $P$ be the unique path in $T$ joining the two vertices in $T$ that correspond to the monochromatic components incident with opposite ends of the board and note that, by Lemma \ref{no-leaf-moves}, we may assume that all moves of $S$ are played in $\bare(T) \subseteq P$; moreover $S$ must flood $P$ when played in this isolated path.  By Lemma \ref{no-pointless-moves} we may further assume that every move in $S$ decreases the number of monochromatic components of $T$ by at least one.

We will say that a component of colour $d$ is \emph{eliminated} by the move $\alpha$ if $\alpha$ changes the colour of that component, linking it to an adjacent component of colour $d' \neq d$.  We say that $\alpha$ \emph{eliminates the colour $d$} if it eliminates the last component of colour $d$ remaining in the graph.

We begin by observing the first move played in each gadget $G_e$ decreases the number of monochromatic components on $P$ by exactly one.  Since we assume that every move decreases the number of monochromatic components of $T$ by at least one, it follows immediately that we decrease the number of monochromatic components on $P$ by at least one unless this move links an island to the path; however, it is clear that if the first move played in any $G_e$ links an island to the path then this will in fact also reduce the number of monochromatic components on $P$ by exactly one.  So it remains to show that the first move in each $G_e$ cannot decrease the number of monochromatic components on $P$ by more than one.  Suppose that this first move changes the colour of the vertex $z$, linking it to at least one neighbour on $P$.  It is only possible for the move that eliminates $z$ to decrease the number of monochromatic components of $P$ by two if both neighbours of $z$ on $P$ have the same colour; but this is clearly not possible if both of these neighbours lie in $G_e$.  Thus it must be that $z$ has a neighbour outside $G_e$, and so $z$ has colour $x_r^e$.  But then a vertex in a different gadget $G_{\bar{e}}$ would at some point have had to have its colour changed to $x_{r-1}^e$ by some move; the first such move could not possibly decrease the number of monochromatic components of $T$, contradicting our assumption that every move of $S$ does decrease the number of monochromatic components of $T$.

We now consider the moves that link islands to $P$, and in particular set $I$ to be the subset of islands such that $v \in G_e$ belongs to $I$ if and only if $v$ is the second island in $G_e$ to be linked to $P$; set $U = \col(I,\omega)$.  By definition, for every $e=uv \in E$, we have $\{u,v\} \cap U \neq \emptyset$, and so $U$ is a vertex cover for $G$.  In the remainder of the proof we will show that we must in fact have $|U| \leq k$.

We claim that, for every $v \in U$, there exists a move which gives a component of $P$ colour $v$ but does not decrease the number of monochromatic components lying on $P$.  Note that every move that links a vertex from $I$ of colour $v$ to the path must give colour $v$ to a component of $P$, as no moves are played in leaves.  There exists at least one such move for each $v \in U$, and if for each $v$ one of these moves does not decrease the number of monochromatic components on $P$ then we are done; otherwise, there must be a move $\alpha$ which changes the colour of a component $X$ in order to link an island from $I$ of colour $v$ to the path (where $v \in G_e'$ for some $e \in E$), and which also links $X$ to some component $Y$ of $P$.  Note that $Y$ must have colour $v$ immediately before $\alpha$.  There two possibilities.
\begin{enumerate}

\item $Y$ contains a vertex $y$ which originally had colour $v$ and has never had its colour changed.  Since the other island in $G_e$ has already been linked to the path (and this must have been done by changing the colour of the island's neighbour on the path, the only other vertex in $G_e$ which initially had colour $v$), this vertex $y$ must belong to $G_{\bar{e}}'$ for some $\bar{e} \neq e$.  Without loss of generality, suppose that $G_{\bar{e}}'$ lies to the right of $G_e'$.  Let $Q$ be the segment of $P$ containing all vertices of $X$ and those vertices of $Y$ that do not lie to the right of $y$.  Suppose that the number of colours appearing in $Q$ under the initial colouring is $i$, so at least $i-1$ moves of the sequence up to and including $\alpha$ must be played on $Q$.  Note that these moves cannot have any effect on vertices lying to the right of $y$, as $y$ has not changed colour before $\alpha$ is played, so (as $X$ is a maximal monochromatic component before $\alpha$) the moves played on $Q$ up to this point do not change the colour of any vertices that do not lie on $Q$.  The number of colours in the initial colouring of $P \setminus Q$ is at least $mr - i + r$, as there are initially at least $mr$ colours in total, and colours $x_1^{\bar{e}},\ldots,x_r^{\bar{e}}$ appear both on $Q$ and on $P \setminus Q$.  All but at most one of these colours must be eliminated by moves that are either played after $\alpha$ or are not played on $Q$, so in total we have
$$|S| \geq i-1 + mr - i + r - 1 = N + |V| + 1,$$
contradicting our initial assumption that $|S| \geq N+k$ for some $k \leq |V|$.

\item Every vertex in $Y$ that initially had colour $v$ has at some point had its colour changed, so every vertex of $Y$ must have had its colour changed to $v$ by a move of played before $\alpha$.  Suppose $\beta$ was the first move that gave a monochromatic component $Z$ of $Y$ colour $v$, subject to the condition that no moves after $\beta$ and before $\alpha$ change the colour of $Z$.  If $\beta$ decreased the number of monochromatic components on $P$, then it must have linked $Z$ to an adjacent vertex $y$ that either had colour $v$ initially or was given colour $v$ by a previous move, and we must have $y \in Y$(as $Y$ is a maximal monochromatic component at a point after $\beta$ has been played).  Note that the colour of $y$ cannot change after $\beta$ and before $\alpha$, as this would mean the colour of $Z$ also changes, contradicting our choice of $\beta$.  Therefore some move before $\beta$ must have given $y$ colour $v$, since even if $y$ had colour $v$ initially it must, by assumption, change colour at some point before $\alpha$ is played.  But then there must have been a move played before $\beta$ which gave a component of $Y$ colour $v$, where this component's colour does not change again after this move and before $\alpha$: this contradicts our choice of $\beta$ as the first such move.  Hence $\beta$ must give a component of $P$ colour $v$ but does not decrease the number of monochromatic components lying on $P$, and so is the move we require.
\end{enumerate}

Thus we see that there are at least $|U|$ moves in $S$ which do not decrease the number of monochromatic components on $P$.  Then, since we know that at least one move in every $G_e$ decreases the number of monochromatic components on $P$ by exactly one, no move can decrease this number by more than two, and initially there are $2mr + 2m$ monochromatic components on $P$, we see that
$$|S| \geq mr + m + m + |U| - 1 = mr + 2m -1 + |U| = N + |U|,$$
and hence $|U| \leq k$.  So $U$ is a vertex cover of $G$ of size at most $k$, as required.
\end{proof}

\begin{proof}[Proof of Theorem \ref{2xnNP}]
The reduction from Vertex Cover is immediate from Lemmas \ref{vc=>strat} and \ref{strat=>vc}.
\end{proof}

\section{Conclusions and open problems}

We have demonstrated an algorithm which shows that the problem $c$-\textsc{Free-Flood-It}, restricted to $2 \times n$ boards, is fixed parameter tractable with parameter $c$, and on the other hand we have shown that \textsc{Free-Flood-It} remains NP-hard in this setting.  This answers an open question from \cite{clifford}, in which Clifford, Jalsenius, Montanaro and Sach showed that \textsc{Fixed-Flood-It} can be solved in time $O(n)$ on such boards.  Our results therefore give the first example of a class of graphs on which the complexity status of the fixed and free versions of the game differ.

Together with results from \cite{clifford} and \cite{general}, this almost completes the picture for the complexity of flood-filling problems restricted to $k \times n$ boards.  However, there does remain one open case:

\begin{prob}
What are the complexities of 3-\textsc{Fixed-Flood-It} and 3-\textsc{Free-Flood-It} restricted to $k \times n$ boards, in the case that $k \geq 3$ is a fixed integer?
\end{prob}

Another interesting direction for further research would be to consider extremal flood-filling problems in this setting.
\begin{prob}
What colourings of a $k \times n$ board $B$ with $c$ colours give the maximum value of $m(B)$?
\end{prob}
As a first step, it should not be hard to determine the maximum value of $m(B)$ for a $1 \times n$ board.  

Such questions can also be generalised to arbitrary graphs, leading to two more natural questions.
\begin{prob}
Given a graph $G$ and an integer $c \geq \chi(G)$, what proper colourings $\omega$ of $G$ with exactly $c$ colours maximise $m(G,\omega)$?
\end{prob}
\begin{prob}
Given a graph $G$, what proper colourings $\omega$ minimise $m(G,\omega)$?  Do such colourings necessarily use exactly $\chi(G)$ colours?
\end{prob}

\end{document}